\DeclareMathOperator{\PV}{P.V.}
\newtheorem*{thm*}{Theorem}
\begin{document}

\title{Unwinding the model manifold: choosing similarity measures to remove local minima in sloppy dynamical systems}

\author{Benjamin L. Francis}
\author{Mark K. Transtrum}
\email{mktranstrum@byu.edu}
\affiliation{Department of Physics and Astronomy, Brigham Young University, Provo, Utah 84602, USA}

\date{\today}

\begin{abstract}
  In this paper, we consider the problem of parameter sensitivity in models of complex dynamical systems through the lens of information geometry.
  We calculate the sensitivity of model behavior to variations in parameters.
  In most cases, models are sloppy, that is, exhibit an exponential hierarchy of parameter sensitivities.
  We propose a parameter classification scheme based on how the sensitivities scale at long observation times.
  We show that for oscillatory models, either with a limit cycle or a strange attractor, sensitivities can become arbitrarily large, which implies a high effective-dimensionality on the model manifold.
  Sloppy models with a single fixed point have model manifolds with low effective-dimensionality, previously described as a ``hyper-ribbon''.
  In contrast, models with high effective dimensionality translate into multimodal fitting problems.
  We define a measure of curvature on the model manifold which we call the \emph{winding frequency} that estimates the density of local minima in the model's parameter space.
  We then show how alternative choices of fitting metrics can ``unwind'' the model manifold and give low winding frequencies.
  This prescription translates the model manifold from one of high effective-dimensionality into the ``hyper-ribbon'' structures observed elsewhere.
  This translation opens the door for applications of sloppy model analysis and model reduction methods developed for models with low effective-dimensionality.
\end{abstract}

\maketitle

\section{Introduction}
\label{sec:intro}
An essential part of the modeling process is selecting a similarity metric that quantifies the extent to which a model mimics the system or phenomenon of interest \footnote{Although our interest here is in mathematical models, the same is true of model organisms in biology that, by virtue of being (for example) mammals, vertebrates, eukaryotes, etc., have a similar phylogeny to a target system, e.g., humans.}.
The choice of similarity metric informs nearly all aspects of the modeling process: model selection, data fitting, model reduction, experimental design, model validation, etc..
Here, we consider the question of similarity metrics for dynamical systems, particularly oscillatory ones.
Although a common choice, the least squares metric comparing model outputs at selected times may lead to models with a \emph{high effective dimensionality}.
In addition to posing technical challenges (e.g., ill-posed, multimodal fitting problems), we argue that a high effective dimensionality reflects a more fundamental issue: that the choice of metric does not accurately capture the phenomenon of interest.
In this paper, we use sloppy model analysis and information geometry to identify parameter combinations in models of dynamical systems that lead to high effective dimensionalities.
We then use methods of signal processing to construct new similarity measures that ``unwind'' the model manifold and lead to well-posed inference problems.

Some have already observed that one's choice of metric is a critical aspect of parameter space exploration \cite{pospischil2008minimal, lemasson2001introduction}.
The relationship between model behavior and parameters is (locally) captured by sensitivity analysis.
Previous studies have decomposed the sensitivities of periodic signals into independent parts that control amplitude, period, and other features \cite{kramer1984sensitivity, larter1984sensitivity, wilkins2009sensitivity}.
In chaotic systems, it has been found that the dynamics exhibit an exponential sensitivity to parameters \cite{lea2000sensitivity, pisarenko2004statistical}.
In such cases, it is common to use measures of the statistical distribution in phase space, rather than time series \cite{lea2000sensitivity, lemasson2001introduction, annan2004efficient, lorenc20074d, lasota2013chaos}.
The present work combines these insights with tools of sloppy model analysis and information geometry.

Sloppy models are a broad class of models whose behavior exhibits an exponential hierarchy of parameter sensitivities \cite{brown2003statistical, brown2004statistical, frederiksen2004bayesian, waterfall2006sloppy, gutenkunst2007sloppiness, casey2007optimal, gutenkunst2007extracting, gutenkunst2007universally, daniels2008sloppiness}.
Using an information geometric approach, it has been shown that the local sensitivity analysis reflects a global property, i.e., a low effective dimensionality described as a \emph{hyper-ribbon} \cite{transtrum2010nonlinear, transtrum2011geometry, machta2013parameter, transtrum2015perspective}.
It has been suggested that this hyper-ribbon structure is why simple effective (i.e., low-dimensional) theories of collective behaviors exist for systems that are microscopically complicated \cite{machta2013parameter, transtrum2015perspective}.

The effective dimensionality of sloppy models has important statistical implications.
Information Criteria (such as Akaike or Bayes) are used in model selection and penalize those with too much fitting flexibility.
A model's fitting power is most easily estimated in the asymptotic limit, in which it is simply approximated by the number of parameters, i.e., the dimension of the model manifold.
For hyper-ribbons, these formula greatly overestimate fitting power of a model \cite{lamont2016information, lamont2017information}.
However, it is also possible for models to exhibit a \emph{high effective dimensionality}, i.e., have model manifolds whose fitting power is much larger than that suggested by the number of parameters.
As we show, these models will exhibit extreme multimodality when fit to data, and have model manifolds with large curvatures that tend to fill large volumes of behavior space.

The challenge of multimodality in fitting problems has been noted in many fields \cite{mendes1998non, moles2003parameter, annan2004efficient, lorenc20074d, ramsay2007parameter}.
Proposals for addressing multimodality have included global search methods \cite{yao1994nonlinear, he2007parameter, moles2003parameter, rodriguez2006hybrid, annan2004efficient}, increasing the size of the parameter space in order to escape local minima \cite{baake1992fitting, ramsay2007parameter}, and changing the parameter landscape through an alternative choice of metric \cite{lemasson2001introduction, annan2004efficient}.

In this study, we use model sensitivity analysis at long times to classify parameter combinations.
In turn, we classify models based on which parameter types they include.
We show that some classes of models exhibit an anomalous statistical dimension, that is, the effective dimensionality of the model may be either much more or less than the number of parameters, and argue for a deeper theoretical implication of this phenomenon.
Using an information geometric approach, we relate the effective statistical dimension to the curvature on the model manifold and explicitly demonstrate that alternative metrics can lead to different effective dimensions.
We present a prescription for how models of high effective dimension can be regularized through an appropriate choice of metric.

\section{Model and parameter classifications}
\label{sec:classes}
	\subsection{Similarity measures}
	\label{sec:cost}
	Consider a parametrized model of time $y(t; \theta)$ (which could be generated, for example, as the solution to a system of differential equations), where $\theta$ is a vector of parameters (which could include initial conditions) and $y$ is either a scalar or vector of observables.
We wish to quantify the similarity of the model behavior for different values of $\theta$.
The most common metric in the literature is least squares regression, in which case the distance (or cost) between two models, with parameters $\theta$ and $\theta_0$, takes the form
\begin{subequations}
  \label{eq:cost_int}
	\begin{align}
		\tag{\ref{eq:cost_int}}
		C(\theta) &= \frac{1}{2T} \int_0^T dt \bm{(}\delta y(t; \theta) \cdot \delta y(t; \theta)\bm{)}, \\
		\label{eq:cost_int_devs}
		\delta y(t; \theta) &\equiv y(t; \theta_0) - y(t; \theta).
	\end{align}
\end{subequations}
We are interested in the sensitivity of model predictions at different time scales.
By increasing the total time $T$, this cost function defines a coarse-graining in the effective sampling rate followed by a renormalization so that the total number of effective data points is constant.
When measuring the distance to observed data $y_i$ at times $t_i$ (with uncertainties $\sigma_i$ used as weights), the integral becomes a sum,
\begin{equation}
	\label{eq:cost_sum}
	C(\theta) = \frac{1}{2T}\sum_i{\left(\frac{y_i - y(t_i; \theta)}{\sigma_i}\right)^2},
\end{equation}
where we have employed the convention, e.g., $\delta y^2 \equiv \delta y \cdot \delta y$.

Being a distance measure, $C$ defines a metric on the space of data and model predictions known as \emph{data space} \cite{transtrum2010nonlinear, transtrum2011geometry}.
We interpret the model predictions $y(t_i; \theta)$ and observations $y_i$ as components of two vectors in data space which we denote $\mathbf{y}(\theta)$ and $\mathbf{y}$, respectively.
By varying $\theta$, $\mathbf{y}(\theta)$ sweeps out a surface in data space known as the \emph{model manifold}.
With this notation, Eq.~\eqref{eq:cost_sum} may be written as
\begin{subequations}
	\label{eq:cost_matrix}
	\begin{align}
		\tag{\ref{eq:cost_matrix}}
		C(\theta) &= \frac{1}{2T}\delta\mathbf{y}^\intercal \Sigma^{-1} \delta\mathbf{y}, \\
		\delta\mathbf{y} &\equiv \mathbf{y} - \mathbf{y}(\theta)
	\end{align}
\end{subequations}
where $\Sigma$ denotes the covariance matrix for the observation vector $\mathbf{y}$.
	\subsection{Sensitivity analysis and parameter classification}
	\label{sec:sensitivities}
	To quantify the sensitivity to parameters of model predictions at different time scales, we consider derivatives of the cost with respect to $\theta$.
Dropping the $t$ and $\theta$ dependence for clarity, the gradient of Eq.~\eqref{eq:cost_int} is
\begin{equation}
	\label{eq:gradient}
	\frac{\partial C}{\partial \theta_\mu} = -\frac{1}{T} \int_0^T dt \left(\delta y \cdot \frac{\partial y}{\partial \theta_\mu}\right),
\end{equation}
and the Hessian is
\begin{align}
  H_{\mu\nu} &\equiv \frac{\partial^2 C}{\partial \theta_\mu \partial \theta_\nu} \nonumber\\
		&= \frac{1}{T} \int_0^T dt \left(\frac{\partial y}{\partial \theta_\mu} \cdot \frac{\partial y}{\partial \theta_\nu} - \delta y \cdot \frac{\partial^2 y}{\partial \theta_\mu \partial \theta_\nu}\right).
\end{align}
Note that because $\delta y(t; \theta_0) = 0$, the gradient at $\theta_0$ is also 0 and the Hessian at $\theta_0$ simplifies to
\begin{equation}
	\label{eq:hessian}
  H_{\mu\nu}(\theta_0) = \frac{1}{T} \int_0^T dt \left(\frac{\partial y}{\partial \theta_\mu} \cdot \frac{\partial y}{\partial \theta_\nu}\right).
\end{equation}
This is also approximately valid when $\theta \approx \theta_0$.
For Eq.~\eqref{eq:cost_matrix}, the Hessian at $\theta_0$ takes the form
\begin{equation}
	\label{eq:FIM}
	H(\theta_0) = \frac{1}{T} \frac{\partial \mathbf{y}}{\partial \theta}^\intercal \Sigma^{-1} \frac{\partial \mathbf{y}}{\partial \theta}.
\end{equation}
Although the gradient and Hessian may be evaluated at other points, $H(\theta_0)$ is particularly important because it is the Fisher Information Metric (FIM) for this measurement process and acts as a Riemannian metric on the model manifold.
We are interested in the eigenvalues of $H$ and their dependence on $T$.

Figure \ref{fig:bigfig} plots a cross section of $C$ (as a surface over $\theta$), the eigenvalues of $H$, and a three-dimensional projection of the model manifold for three models (details of these models are found in Appendix \ref{sec:models}).
\begin{figure*}
	\includegraphics[width=\textwidth]{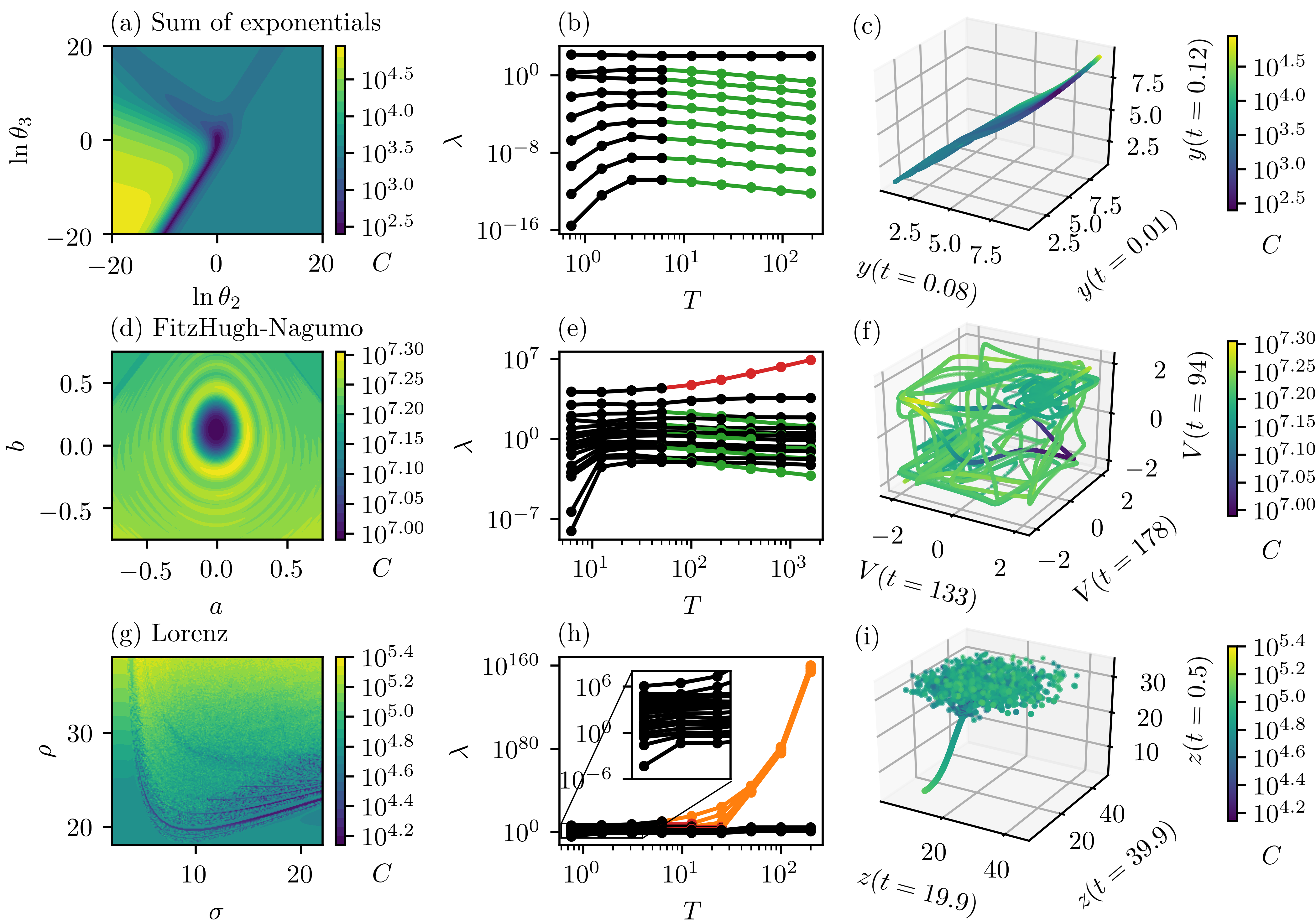}
	\caption{
		\textbf{Model classes.}
		\textit{\nth{1} column:} Cross sections of $C(\theta)$ [Eq.~\eqref{eq:cost_int}] for three prototype models (see Appendix \ref{sec:models}, models $A$, $D$, and $G$).
		Contrast the canyons in (a) with the ripples in (d) and the roughness in (g).
		\textit{\nth{2} column:} Hessian eigenvalues as a function of sampling time (same models).
		Colors differentiate scaling behaviors at long times.
		\textit{\nth{3} column:} Projections of the model manifold (same models).
		In (c), a 3-ball in parameter space was mapped to the nearly 1-dimensional region of prediction space shown (low effective dimensionality).
		By contrast, for (f) and (i) a single parameter was varied producing a 1D (space-filling) curve in prediction space (high effective dimensionality).
		Note: In (i), the model goes through a bifurcation where the manifold begins to oscillate rapidly.
		The sampling required to see continuity is prohibitive, so the points plotted become scattered.
		}
	\label{fig:bigfig}
\end{figure*}

The first model is characterized by a transient decay to a steady state.
As illustrated in Fig.~\ref{fig:bigfig}(b), for large $T$, the model becomes increasingly insensitive to parameter combinations that control transient behavior, scaling as $O(T^{-1})$.
The parameter that determines the steady state scales as $O(1)$.
These scaling behaviors can be motivated as follows.
We assume that parameter combinations which control the transient dynamics have sensitivities that decay to zero at long times,
\begin{equation}
	\frac{\partial y}{\partial \theta_{\mu}}(t\rightarrow\infty; \theta) \sim 0,
\end{equation}
while those that control the steady state are asymptotically constant:
\begin{equation}
	\frac{\partial y}{\partial \theta_{\mu}}(t\rightarrow\infty; \theta) \sim \text{const}.
\end{equation}
In light of Eq.~\eqref{eq:hessian}, this leads to the $O(T^{-1})$ and $O(1)$ scaling behaviors observed.
Note that as the total sampling time $T$ is increased past the transient dynamics, the only new information obtained is information about the final steady state.
Our choice of normalization keeps the effective number of data points constant, so increasing $T$ results in an effective loss of information about the transient dynamics but no information loss for the steady state.

The second model exhibits a periodic limit cycle.
As shown in Fig.~\ref{fig:bigfig}(e), parameter combinations controlling features of the attractor scale as $O(1)$, those that control the transient approach to the attractor scale as $O(T^{-1})$, and the combination controlling frequency scales as $O(T^2)$.
Motivation for the scaling behavior of the parameter combinations controlling the the transient approach to the attractor follow as in the previous case.
To motivate the other two scaling behaviors, we consider the steady state of the model and expand in a Fourier series:
\begin{align}
	y(t\rightarrow\infty; \theta) &= \sum_{k = -\infty}^{\infty}{c_k(\theta) e^{ik\omega(\theta)t}} \nonumber \\
		&= y(t\rightarrow\infty; c(\theta), \omega(\theta)).
\end{align}
There is an intermediate dependence of the steady state on the amplitude coefficients $c_k$ and the oscillatory frequency $\omega$.
This allows us to decompose the parameter sensitivities of the steady state into two parts:
\begin{equation}
	\label{eq:decoupling}
	\frac{\partial y}{\partial \theta_{\mu}}(t\rightarrow\infty; \theta) = \sum_{k = -\infty}^{\infty}{\frac{\partial y}{\partial c_k}\frac{\partial c_k}{\partial \theta_{\mu}}} + \frac{\partial y}{\partial \omega}\frac{\partial \omega}{\partial \theta_{\mu}}.
\end{equation}
Because $c_k$ and $\omega$ are time-independent by construction, the time dependence of these sensitivities is due entirely to the coefficients
\begin{equation}
	\frac{\partial y}{\partial c_k}(t\rightarrow\infty; \theta) = e^{ik\omega(\theta)t},
\end{equation}
which is bounded by a constant, and
\begin{align}
	\frac{\partial y}{\partial \omega}(t\rightarrow\infty; \theta) &= \sum_{k = -\infty}^{\infty}{iktc_k(\theta) e^{ik\omega(\theta)t}} \nonumber \\
		&\sim t,
\end{align}
which grows linearly with time.
The amplitude sensitivities $(\partial y/\partial c_k)(\partial c_k/\partial \theta_\mu)$ control the shape and amplitude of the steady state and give rise to $O(1)$ scaling behavior [referring again to Eq.~\eqref{eq:hessian}].
By contrast, the frequency sensitivity $(\partial y/\partial \omega)(\partial \omega/\partial \theta_\mu)$ results in $O(T^2)$ scaling behavior.
Other studies have focused on the sensitivity to period, rather than frequency, but the temporal scaling behavior is the same for both \cite{kramer1984sensitivity, larter1984sensitivity, wilkins2009sensitivity}.

Finally, the third model is chaotic; parameters controlling its dynamics exhibit exponential sensitivities,
\begin{equation}
	\frac{\partial y}{\partial \theta_{\mu}}(t\rightarrow\infty; \theta) \sim e^{\lambda_{\mu} t},
\end{equation}
leading to the exponential scaling behavior in Fig.~\ref{fig:bigfig}(h).

We classify parameter combinations in a model according to their scaling behavior; this classification is summarized in Table \ref{tab:classification}.
LaMont and Wiggins have also proposed a classification of model parameters, based on the \emph{complexity} of a given parameter combination \cite{lamont2016information}.
In the case of dynamical models, our analysis illustrates the mechanisms that give rise to the complexities of each class.
\begin{table}[htbp]
	\caption{
		\textbf{Parameter classification.}
		}
	\label{tab:classification}
	\begin{ruledtabular}
	\begin{tabular}{cc}
		Eigenvalue scaling behavior & Dynamics controlled \\ \hline
		$O(T^{-1})$                 & transient           \\
		$O(1)$                      & steady state        \\
		$O(T^2)$                    & frequency           \\
		$O(e^T)$                    & chaotic behavior    \\
	\end{tabular}
	\end{ruledtabular}
\end{table}
	\subsection{Model classification}
	\label{sec:model_classes}
	The different scaling behaviors for the Hessian eigenvalues are accompanied by different structures in both the cost surface and the model manifold (\nth{1} and \nth{3} columns of Fig.~\ref{fig:bigfig}).
The cost surface of the first model is characterized by a single, highly anisotropic basin.
Its model manifold is similarly anisotropic; the long, narrow \emph{hyper-ribbon} structure is common for models with low effective dimensionality \cite{transtrum2010nonlinear, transtrum2011geometry}.
In contrast, the second cost surface has many local minima and a model manifold with high curvature.
The third cost surface exhibits a fractal-like roughness (although for finite $T$ the derivative with respect to parameters formally exists everywhere).
Its model manifold is even more highly curved and space-filling.

These three models are prototypes of three model classes, distinguished by the scaling behavior of the largest eigenvalue for large $T$.
For the first class, $\lambda_{\text{max}} \sim O(1)$ is bound by a constant.
For the second class, $\lambda_{\text{max}} \sim O(T^n)$ is bound by a polynomial.
For the third class, $\lambda_{\text{max}} \sim O(e^T)$ is exponential.
We plot the eigenvalues of the Hessian (at large, fixed $T$) for the three protoype models and for two additional models from each class in Fig.~\ref{fig:eigvals} (details of these models are found in Appendix \ref{sec:models}).
All nine models are considered \emph{sloppy}; that is, the eigenvalues of the Hessian are spread over many orders of magnitude.
Accordingly, we refer to these model classes as \emph{sloppy models of the first, second, and third kinds}, respectively.
\begin{figure}[htbp]
	\includegraphics[width=1.0\columnwidth]{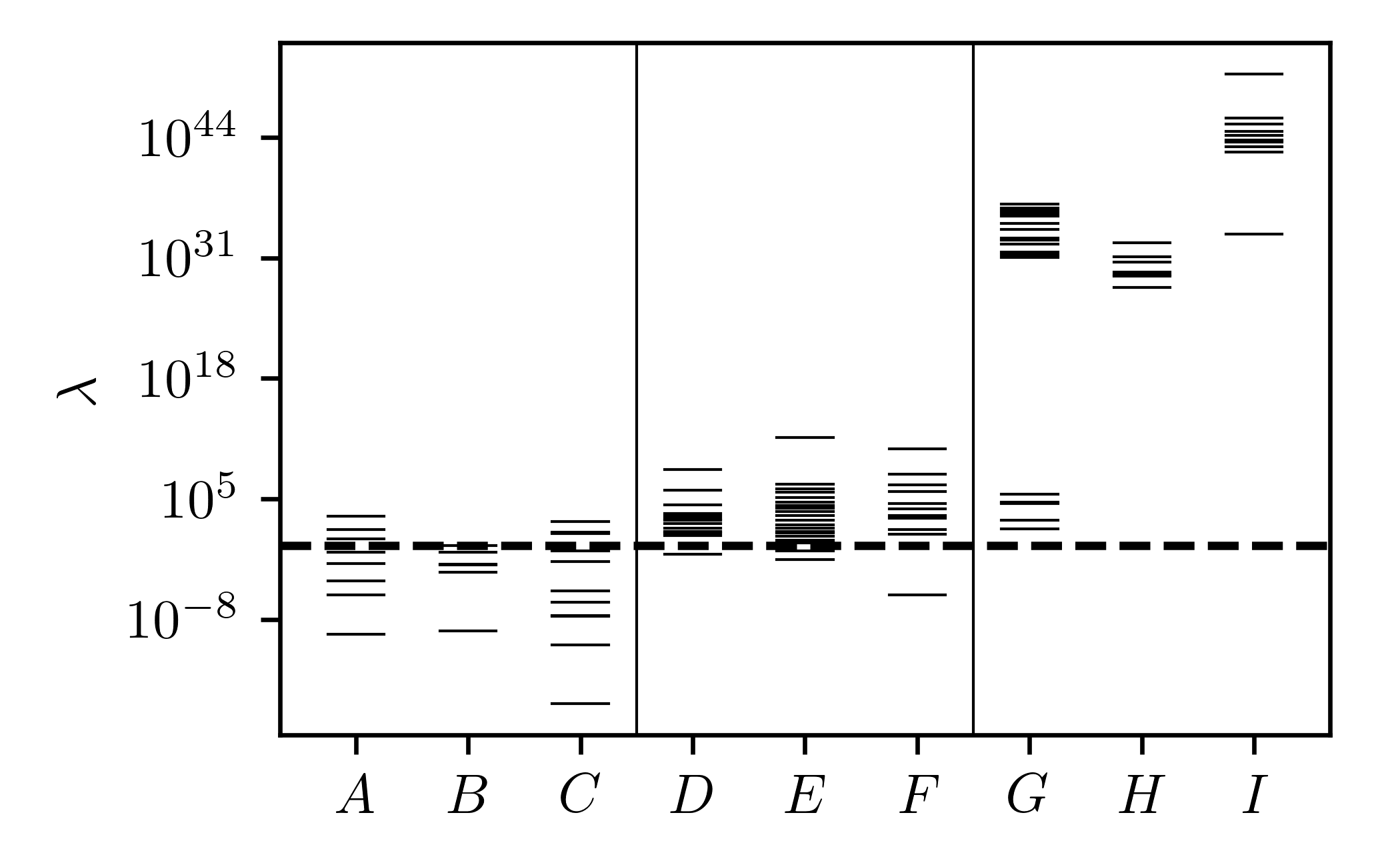}
	\caption{
		\textbf{Eigenvalues of $H(\theta_0)$} for the following models (see Appendix \ref{sec:models} for details):
		$A$) sum of exponentials;
		$B$) rational polynomial;
		$C$) biological adaptation;
		$D$) FitzHugh-Nagumo;
		$E$) Hodgkin-Huxley;
		$F$) Wnt oscillator;
		$G$) Lorenz;
		$H$) Hindmarsh-Rose;
		$I$) damped, driven pendulum.
		$\{A, B, C\}$ are nonoscillatory models, $\{D, E, F\}$ are periodic, and $\{G, H, I\}$ are chaotic.
		}
	\label{fig:eigvals}
\end{figure}

\section{Manifold curvature}
\label{sec:curvature}
The large sensitivities of sloppy models of the second and third kinds are necessarily associated with large curvature and high effective dimensionality on the model manifold.
This can be understood by noting that the absolute variation in the model behavior is bounded (the models oscillate within a finite range and do not grow).
This restricts the model manifold to a finite region of data space.
Large parameter sensitivities indicate that the model manifold is very long in the associated parameter directions.
The only way to fit something very long into a finite region is for it to curve, fold, or wind.
The combination of large parameter sensitivities and bounded predicted behavior necessitates large manifold curvature.
For large $T$, there will be enough winding that the manifold effectively fills a volume of higher dimension than that of the manifold itself.
This high effective dimensionality is the opposite effect of the low effective dimensionality argued for in previous studies of sloppy models \cite{machta2013parameter, transtrum2015perspective}.

To quantify this effect, the extrinsic curvature associated with parameter direction $v$ is given by the geodesic curvature $k(v) = 1/R$ (as in Ref.~\cite{transtrum2011geometry}), where $R$ is the radius of curvature of the circle tangent to the manifold along direction $(\partial \mathbf{y}/\partial \theta^\mu) v^\mu$ (sum over $\mu$ implied; see Fig.~\ref{fig:tang_cir}).
We define the \emph{winding frequency},
\begin{equation}
	\label{eq:wind_freq}
  \omega(v) \equiv \left\vert \frac{\partial \mathbf{y}}{\partial \theta^\mu} v^\mu \right\vert k(v)
\end{equation}
which is the angular velocity at which the manifold winds around the tangent circle, such that $f = \omega/2\pi$ is the number of windings of the manifold per unit change in parameters.
Because $C$ is a distance measure for the manifold embedding space, each winding of the manifold results in a local minimum of $C$, so $f$ is also the frequency of local minima in $C$ as we move along parameter direction $v$.
\begin{figure}[htbp]
	\includegraphics[width=1.0\columnwidth]{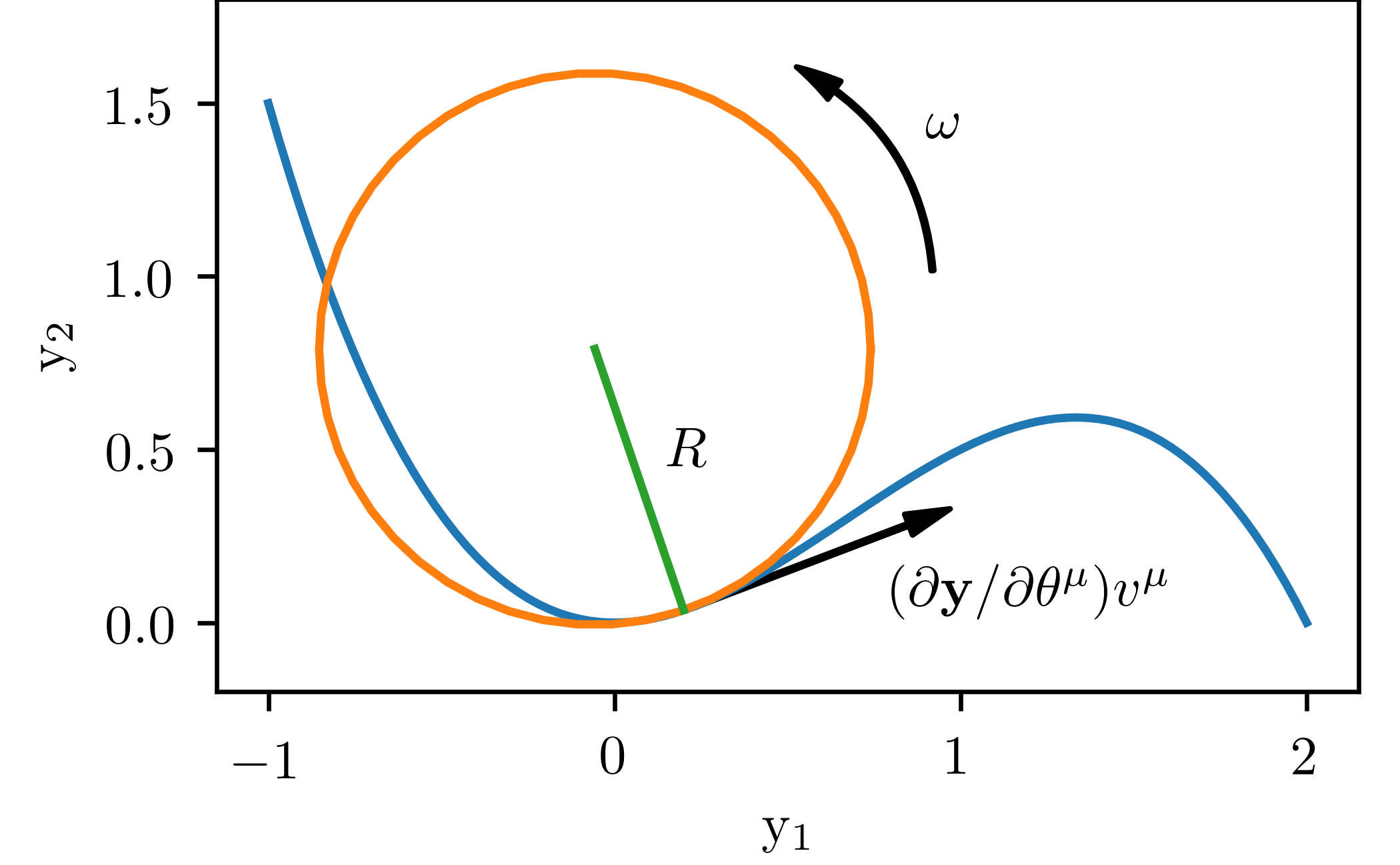}
	\caption{
		\textbf{Illustration of winding frequency.}
		The ``s''-shaped curve represents a possible 1D cross-section of a model manifold (obtained, for example, by varying just one parameter combination) in a simple 2D data space.
		Also shown are the tangent/velocity vector $(\partial \mathbf{y}/\partial \theta^\mu) v^\mu$ (sum over $\mu$ implied), the tangent circle with radius $R$, and the winding frequency $\omega$ defined in Eq.~\eqref{eq:wind_freq}.
		}
	\label{fig:tang_cir}
\end{figure}

Figure \ref{fig:freqs} shows winding frequencies along Hessian eigendirections for the models from Fig.~\ref{fig:eigvals}.
Notice that sloppy models of the first kind (i.e., hyper-ribbons) have low winding frequencies.
Sloppy models of the second kind have high winding frequency in the stiffest direction, which controls frequency.
Sloppy models of the third kind have high winding frequencies in more than one direction.
\begin{figure}[htbp]
	\includegraphics[width=1.0\columnwidth]{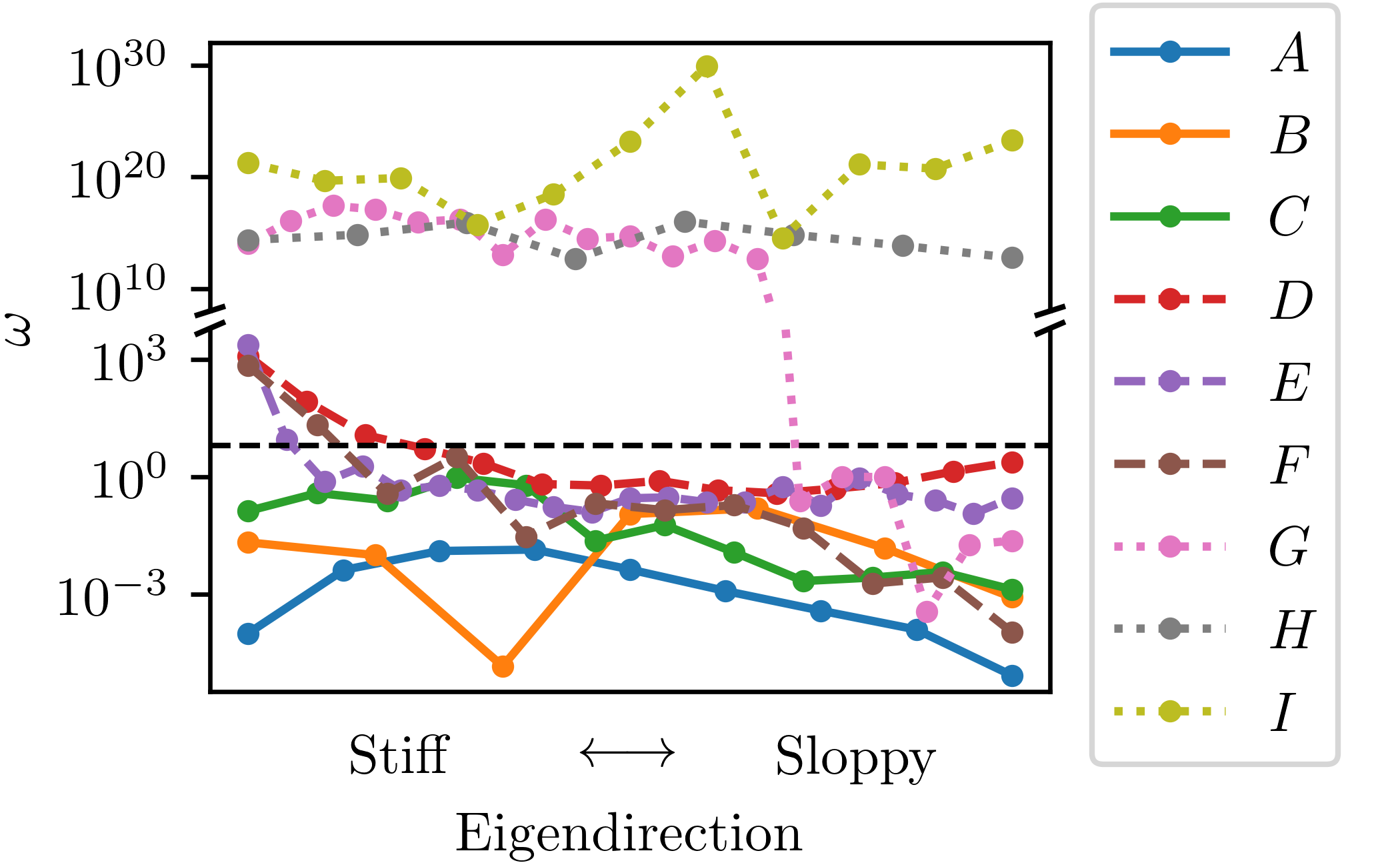}
	\caption{
		\textbf{Winding frequencies} along Hessian eigendirections for the models from Fig.~\ref{fig:eigvals}, ordered from left to right by magnitude of the corresponding eigenvalue.
		``Stiff'' refers to eigendirections with large eigenvalues, while ``sloppy'' refers to eigendirections with small eigenvalues.
		The black dashed line at $\omega = 2\pi$ roughly distinguishes low from high winding frequencies.
		}
	\label{fig:freqs}
\end{figure}

The effective dimensionality, estimated by the winding frequencies, depends on the metric of the model manifold embedding space, i.e., Eq.~\eqref{eq:cost_int}.
We now show that alternative choices for embedding the model manifold can lead to different effective dimensionalities.

\section{Alternative metrics}
\label{sec:metrics}
	\subsection{Analytic signal (AS)}
	\label{sec:AS}
	The high effective dimensionality of sloppy models of the second kind is due entirely to the parameter combination controlling frequency. 
Varying this parameter combination causes model predictions to pass in and out of phase with each other, resulting in local minima in the cost (see Figs.~\ref{fig:cost_dvt} and \ref{fig:time_vs_phase}).
We avoid this aliasing by defining the phase of oscillation as a monotonically increasing function of time and comparing model behaviors at the same phase.

\begin{figure}[htbp]
	\includegraphics[width=1.0\columnwidth]{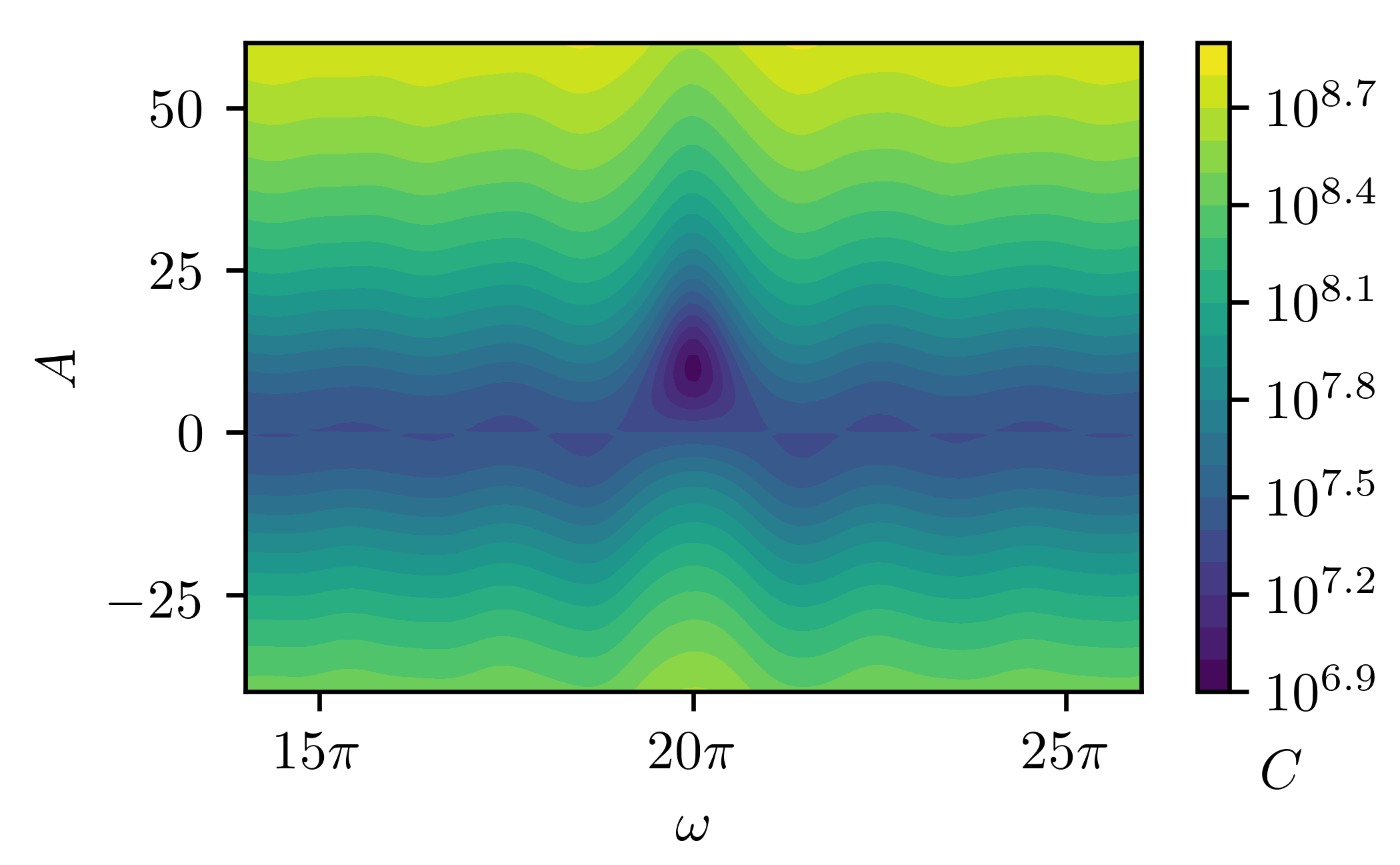}
	\caption{
		\textbf{Cost $C$} for the model $y(t) = A\cos(\omega t)$, treating $A$ and $\omega$ as parameters.
		The cost has been rescaled to make the local minima apparent.
		}
	\label{fig:cost_dvt}
\end{figure}

\begin{figure}[htbp]
	\includegraphics[width=1.0\columnwidth]{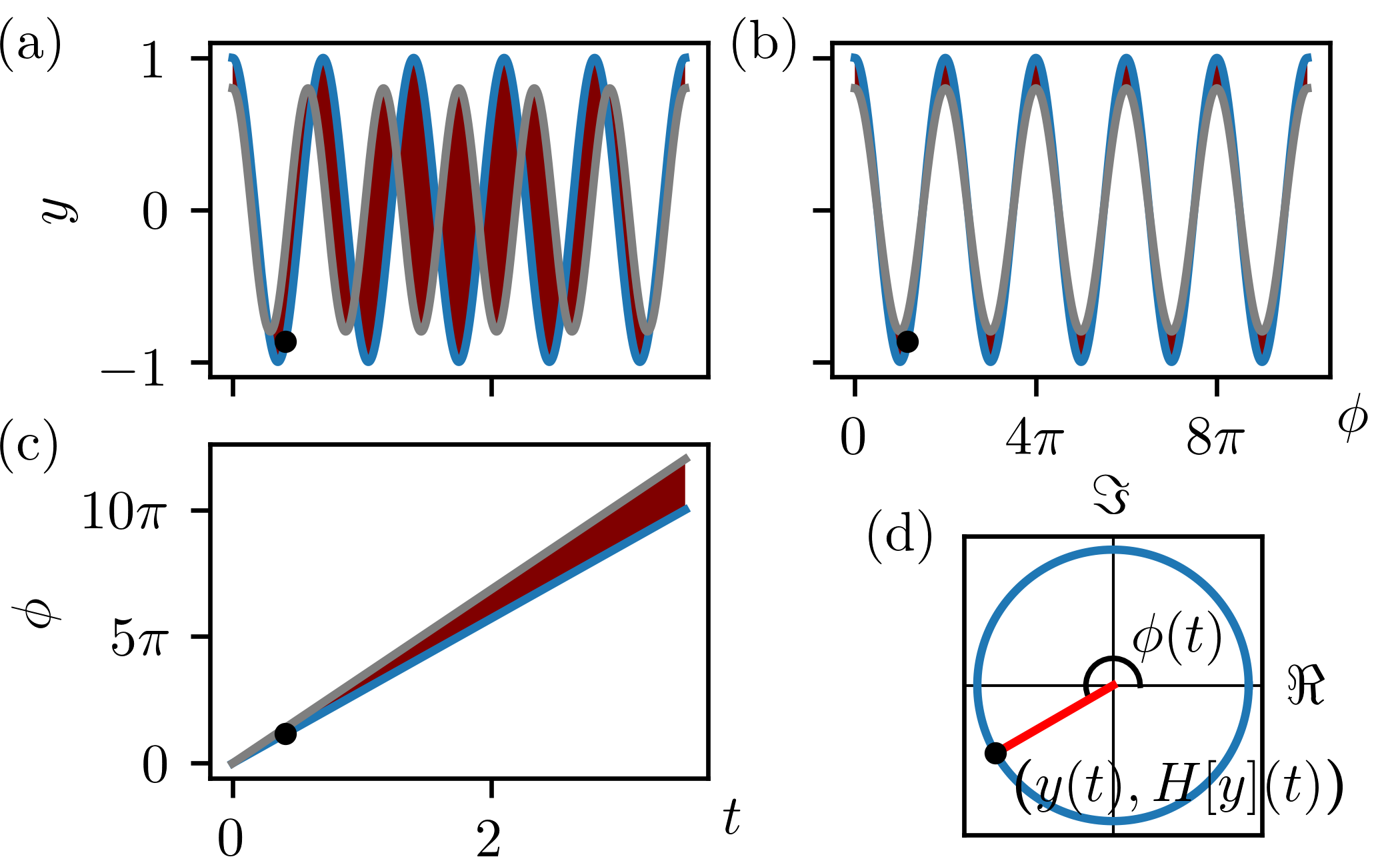}
	\caption{
		\textbf{Decoupling amplitude from phase.}
		(a) Signal vs. time for two signals with mismatched amplitude and frequency; their difference is indicated by the shading between the curves.
		The mismatch in frequency causes a large difference $\delta y$ when the two signals are out of phase ($t \approx 1.8$) but little or no difference when they are in phase ($t \approx 3.5$).
		(b) Signal vs.~phase for the same signals.
		The difference-at-constant-phase $\delta \tilde{y}$ is consistent throughout (see Sec.~\ref{sec:AS_cost}).
		(c) Phase vs.~time for the two signals.
		The difference in phase $\delta \phi$ simply grows linearly (see Sec.~\ref{sec:AS_cost}).
		(d) Analytic representation in the complex plane of the black point marked in the other 3 panels (see Sec.~\ref{sec:phase_def}).
		}
	\label{fig:time_vs_phase}
\end{figure}

Many definitions of instantaneous frequency and phase have been considered in the literature \cite{boashash1992estimating, pikovsky1997phase, kiss2001phase}.
We use the \emph{analytic signal} approach \cite{gabor1946theory}, which is discussed in Sec.~\ref{sec:phase_def}.
Some alternatives are discussed in Appendix \ref{sec:alt_phase}.
We propose a new metric for oscillatory systems in Sec.~\ref{sec:AS_cost}.
We discuss the Hessian and FIM in Sec.~\ref{sec:FIM}.
Results of applying the new metric to the FitzHugh-Nagumo model are found in Sec.~\ref{sec:AS_results}.
Comparing model predictions with observational data in the new paradigm requires calculating the phases of the observations, which will have uncertainty.
We propagate uncertainty and derive appropriate covariance matrices in Appendix \ref{sec:covariance}.
Calculation of winding frequencies requires second-order parameter sensitivities (specifically, when calculating the geodesic curvature $\kappa$); we derive the necessary formulas in Appendix \ref{sec:derivatives}.
		\subsubsection{Phase definition}
		\label{sec:phase_def}
		The analytic representation $z(t)$ of an oscillatory signal $y(t)$ is a complex function defined as
\begin{equation}
	\label{eq:AS}
	z(t) \equiv y(t) + i H[y](t) = A(t)e^{i\phi(t)},
\end{equation}
where $H[y](t)$ is the Hilbert transform of $y(t)$,
\begin{equation}
	H[y](t) \equiv \frac{1}{\pi}\PV\int_{-\infty}^{\infty}{\frac{y(\tau)}{t - \tau}d\tau},
\end{equation}
and the magnitude $A(t)$ and argument $\phi(t)$ of $z(t)$ are
\begin{align}
	\label{eq:AS_amp}
	A(t) & \equiv \sqrt{y^2(t) + H^2[y](t)} \\
	\label{eq:AS_phase}
	\phi(t) & \equiv \tan^{-1}\left(\frac{H[y](t)}{y(t)}\right).
\end{align}
[See Fig.~\ref{fig:time_vs_phase}(d).]

In light of Eq.~\eqref{eq:AS}, we reinterpret $y(t)$ in terms of amplitude and phase as
\begin{equation}
	\label{eq:amp_and_phase}
	y(t) = \Re\left\{z(t)\right\} = A(t)\cos\bm{(}\phi(t)\bm{)}.
\end{equation}
We then define a new signal $\tilde{y}$ as a function of phase:
\begin{equation}
	\label{eq:time_to_phase}
	\tilde{y}\bm{(}\phi(t)\bm{)} \equiv y(t) = A(t)\cos\bm{(}\phi(t)\bm{)}.
\end{equation}
As long as $\phi(t)$ is monotonically increasing, the relationship between $\phi$ and $t$ is invertible.
Hence, we may also write
\begin{equation}
	\label{eq:time_to_phase2}
	\tilde{y}(\phi) = A\bm{(}t(\phi)\bm{)}\cos(\phi).
\end{equation}

If $y(t)$ is a vector (rather than scalar) function of time, then $\phi(t)$ will also be a vector function of time.
That is, for each scalar component of $y(t)$, the preceding prescription for constructing the phase should be applied separately.
If this is not possible or does not produce a set of monotonically increasing phases, it may be applied to a single scalar component of $y$ and the resulting phase used for all of the components.
For other alternatives that avoid using the Hilbert transform, see Appendix \ref{sec:alt_phase}.

As a final note, a necessary condition for $\phi(t)$ to be monotonically increasing is that the signal $y(t)$ oscillate around 0.
If it is not, the time average $\left\langle y(t) \right\rangle = (1/T)\int_0^T{y(t)dt}$ should be subtracted from $y(t)$ prior to calculating the phase.
$H[y](t)$ will be unaffected, as the Hilbert transform of a constant is 0.
		\subsubsection{New cost using phase}
		\label{sec:AS_cost}
		We want to construct a cost that compares models at the same phase rather than the same time.
Actually, we can go one step further and construct a cost that also retains the phase information while still eliminating the aliasing of oscillations that results in local minima.
We use an approximation of Eq.~\eqref{eq:cost_int_devs} that arises from the propagation of uncertainty considered in Appendix \ref{sec:covariance} [see Eq.~\eqref{eq:devs_ytilde}].
In the discrete case [comparing a model $y(t_i; \theta)$ with observational data $y_i$], Eq.~\eqref{eq:cost_int_devs} is
\begin{equation}
	\label{eq:devs_t}
	\delta y_i \equiv y_i - y(t_i; \theta).
\end{equation}
We define the deviations of the phases $\phi_i$ of the observations from the phases $\phi(t_i; \theta)$ predicted by the model as
\begin{equation}
	\label{eq:devs_phi_def}
	\delta \phi_i \equiv \phi_i - \phi(t_i; \theta)
\end{equation}
[see Fig.~\ref{fig:time_vs_phase}(c)],
and the deviations of the observations from the predictions at constant phase as
\begin{equation}
	\label{eq:devs_ytilde_def}
	\delta \tilde{y}_i \equiv y_i - \tilde{y}(\phi_i; \theta)
\end{equation}
[see Fig.~\ref{fig:time_vs_phase}(b)].
The approximation we use for oscillatory systems is
\begin{equation}
	\label{eq:deviations}
	\delta y_i \approx \delta \tilde{y}_i + \left(\frac{\partial \tilde{y}}{\partial \phi}\right)_i\delta \phi_i \equiv \delta \hat{y}_i.
\end{equation}
The first term captures changes in amplitude while the second term captures changes in phase or frequency, so both pieces of information are retained (see Sec.~\ref{sec:FIM}).
However, because this approximation is first order in $\delta \phi_i$, it does, in fact, eliminate the nonlinear dependence on frequency that results in ripples in the cost (refer back to Fig.~\ref{fig:time_vs_phase}), which we will show.

We define a new cost function by replacing $\delta\mathbf{y}$ in Eq.~\eqref{eq:cost_matrix} with the approximation $\delta\hat{\mathbf{y}}$ defined according to Eq.~\eqref{eq:deviations}:
\begin{subequations}
	\label{eq:cost_phase}
	\begin{equation}
		\tag{\ref{eq:cost_phase}}
		C^\phi(\theta) \equiv \frac{1}{2T}\delta \hat{\mathbf{y}}^\intercal \Sigma^{-1} \delta \hat{\mathbf{y}}
	\end{equation}
Using Eq.~\eqref{eq:deviations}, this may be decomposed into three pieces representing the amplitude contribution, the phase contribution, and a cross term:
	\begin{align}
		\label{eq:cost_decomp}
		C^\phi(\theta) &= C^{\tilde{y}(\phi)}(\theta) + C^{\phi(t)}(\theta) + C^{X}(\theta) \\
		C^{\tilde{y}(\phi)}(\theta) &\equiv \frac{1}{2T}\delta \tilde{\mathbf{y}}^\intercal \Sigma^{-1} \delta \tilde{\mathbf{y}} \\
		C^{\phi(t)}(\theta) &\equiv \frac{1}{2T}\Delta \bm{\phi}^\intercal \Sigma^{-1} \Delta \bm{\phi} \\
		C^X(\theta) &\equiv \frac{1}{2T}\left(\delta \tilde{\mathbf{y}}^\intercal \Sigma^{-1} \Delta \bm{\phi} + \Delta \bm{\phi}^\intercal \Sigma^{-1} \delta \tilde{\mathbf{y}}\right) \\
		\Delta \phi_i &\equiv \left(\frac{\partial \tilde{y}}{\partial \phi}\right)_i\delta \phi_i.
	\end{align}
\end{subequations}
We compare $C$ [Eq.~\eqref{eq:cost_matrix}], $C^{\tilde{y}(\phi)}$, $C^{\phi(t)}$, and $C^{\phi}$ in Fig.~\ref{fig:sin_cost_decomp}.

\begin{figure}[htbp]
	\includegraphics[width=1.0\columnwidth]{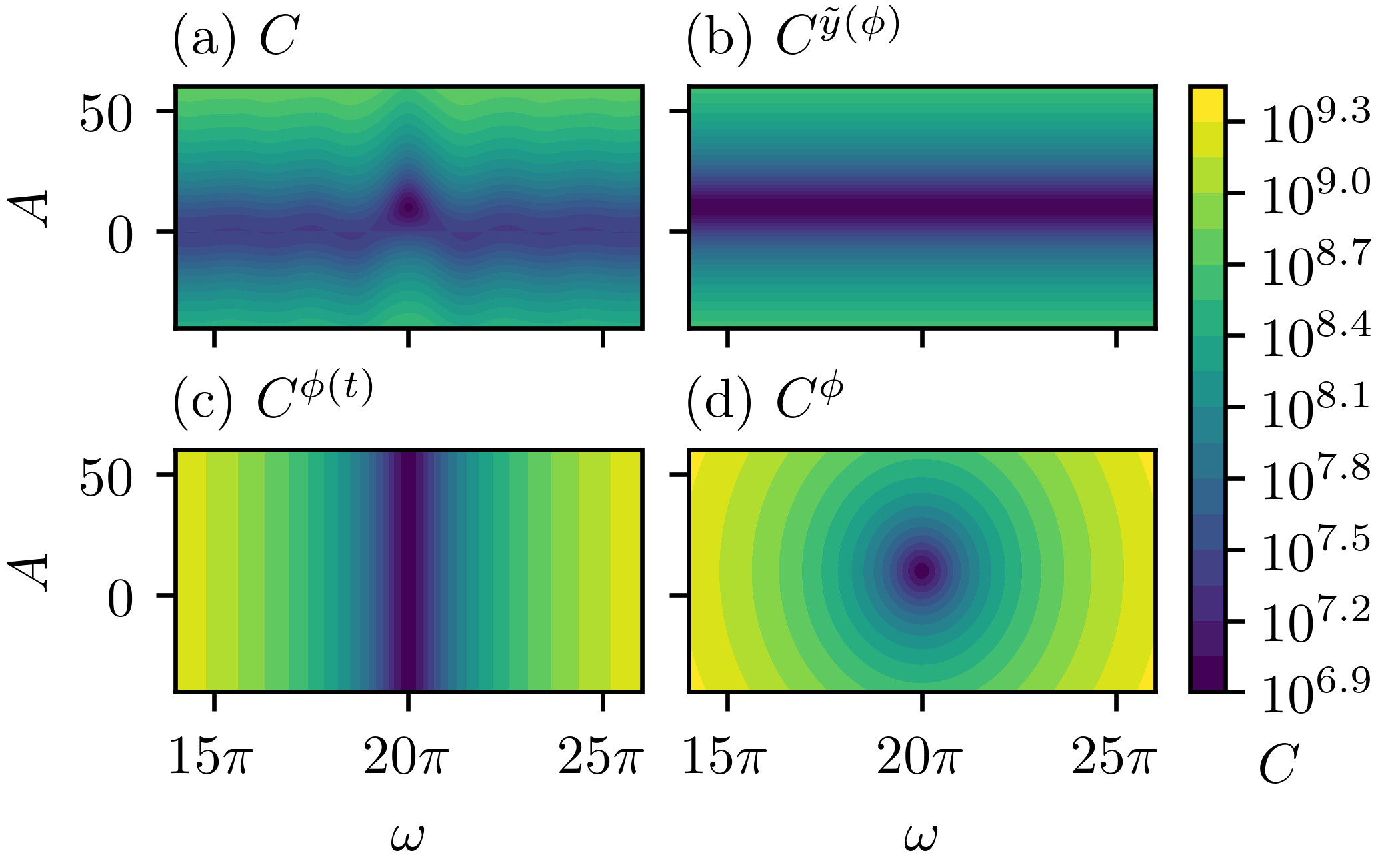}
	\caption{
		\textbf{Cost decomposition} of the model $y(t) = A\cos(\omega t)$.
		(a) Same as Fig.~\ref{fig:cost_dvt}.
		(b) $\tilde{y}(\phi) = A\cos(\phi)$ is insensitive to changes in $\omega$ and varies linearly with $A$, resulting in a quadratic dependence of $C^{\tilde{y}(\phi)}$ on $A$ only.
		(c) $\phi(t) = \omega t$ is insensitive to changes in $A$ and varies linearly with $\omega$, resulting in a quadratic dependence of $C^{\phi(t)}$ on $\omega$ only.
		(d) Cost using Eq.~\eqref{eq:cost_phase}.
		The ripples in (a) have been replaced with a quadratic basin.
		}
	\label{fig:sin_cost_decomp}
\end{figure}

When comparing two models with parameters $\theta_0$ and $\theta$, Eqs.~\eqref{eq:devs_phi_def}, \eqref{eq:devs_ytilde_def}, \eqref{eq:deviations}, and \eqref{eq:cost_phase} take the form
\begin{align}
	\delta \phi(t; \theta) &\equiv \phi(t; \theta_0) - \phi(t; \theta), \\
	\delta \tilde{y}(t; \theta) &\equiv \tilde{y}\bm{(}\phi(t; \theta_0); \theta_0\bm{)} - \tilde{y}\bm{(}\phi(t; \theta_0); \theta\bm{)} \nonumber\\
		&= y(t; \theta_0) - \tilde{y}\bm{(}\phi(t; \theta_0); \theta\bm{)}, \\
	\delta \hat{y}(t; \theta) &\equiv \delta \tilde{y}(t; \theta) + \frac{\partial \tilde{y}\bm{(}\phi(t; \theta_0); \theta_0\bm{)}}{\partial \phi}\delta \phi(t, \theta),
\end{align}
\begin{equation}
	\label{eq:cost_AS}
	C^\phi(\theta) \equiv \frac{1}{2T} \int_0^T dt \bm{(}\delta \hat{y}(t; \theta) \cdot \delta \hat{y}(t; \theta)\bm{)}.
\end{equation}
As we show in Sec.~\ref{sec:FIM}, Eq.~\eqref{eq:cost_AS} is a quadratic approximation of Eq.~\eqref{eq:cost_int} (i.e., they have the same gradient and Hessian).
In other words, Eq.~\eqref{eq:cost_AS} is an isometric embedding of the model manifold.
However, because changes in frequency only affect $\phi(t; \theta)$, which is unbounded, the large manifold volume is no longer constrained to a finite region of the embedding space.
		\subsubsection{Fisher Information Metric}
		\label{sec:FIM}
		We stated in Sec.~\ref{sec:sensitivities} that the Hessian of the cost evaluated at $\theta_0$ is the Fisher Information Metric (FIM).
Specifically, the FIM is related to the cost by
\begin{equation}
	\label{eq:FIM_def}
	\mathcal{I}_{\mu\nu} = \left<\frac{\partial^2 C(\theta_0)}{\partial \theta_{\mu} \partial \theta_{\nu}}\right> = \left< H_{\mu\nu}(\theta_0) \right>.
\end{equation}
We have already shown that
\begin{equation}
	\label{eq:FIM_int}
	\mathcal{I}_{\mu\nu} = \frac{1}{T} \int_0^T dt \frac{\partial y}{\partial \theta_\mu} \cdot \frac{\partial y}{\partial \theta_\nu}
\end{equation}
for Eq.~\eqref{eq:cost_int} [see Eq.~\eqref{eq:hessian}].
We can rewrite this for oscillatory systems in light of Eq.~\eqref{eq:time_to_phase},
\begin{equation}
	\label{eq:time_to_phase3}
	y(t; \theta) = \tilde{y}\bm{(}\phi(t; \theta); \theta\bm{)}.
\end{equation}
Differentiating Eq.~\eqref{eq:time_to_phase3} with respect to $\theta_\mu$, we obtain
\begin{equation}
	\label{eq:decoupling2}
	\left.\frac{\partial y}{\partial \theta_\mu}\right|_t = \left.\frac{\partial \tilde{y}}{\partial \theta_\mu}\right|_\phi + \left.\frac{\partial \tilde{y}}{\partial \phi}\right|_\theta \left.\frac{\partial \phi}{\partial \theta_\mu}\right|_t,
\end{equation}
where the $|_x$ notation is used to indicate that the argument $x$ is being held constant in the given derivative.
This relationship is exact and shows a decoupling of the amplitude sensitivity from the phase sensitivity [similar to Eq.~\eqref{eq:decoupling}].
Substituting Eq.~\eqref{eq:decoupling2} into Eq.~\eqref{eq:FIM_int} yields
\begin{equation}
	\label{eq:FIM_AS}
	\mathcal{I}_{\mu\nu} = \frac{1}{T} \int_0^T dt \left(\frac{\partial \tilde{y}}{\partial \theta_\mu} + \frac{\partial \tilde{y}}{\partial \phi} \frac{\partial \phi}{\partial \theta_\mu}\right) \cdot \left(\frac{\partial \tilde{y}}{\partial \theta_\nu} + \frac{\partial \tilde{y}}{\partial \phi} \frac{\partial \phi}{\partial \theta_\nu}\right).
\end{equation}

We now show that Eq.~\eqref{eq:FIM_AS} is also the FIM for Eq.~\eqref{eq:cost_AS}.
First we calculate the gradient of $C^{\phi}(\theta)$:
\begin{equation}
	\frac{\partial C^\phi}{\partial \theta_\mu} = -\frac{1}{T} \int_0^T dt \left\{\delta \hat{y} \cdot \left(\frac{\partial \tilde{y}}{\partial \theta_\mu} + \frac{\partial \tilde{y}}{\partial \phi} \frac{\partial \phi}{\partial \theta_\mu}\right)\right\}.
\end{equation}
Next we calculate the Hessian and evaluate it at $\theta_0$ [note that $\delta\hat{y}(\theta_0) = \delta\phi(\theta_0) = 0$]:
\begin{equation}
	H_{\mu\nu}(\theta_0) = \frac{1}{T} \int_0^T dt \left(\frac{\partial \tilde{y}}{\partial \theta_\mu} + \frac{\partial \tilde{y}}{\partial \phi} \frac{\partial \phi}{\partial \theta_\mu}\right) \cdot \left(\frac{\partial \tilde{y}}{\partial \theta_\nu} + \frac{\partial \tilde{y}}{\partial \phi} \frac{\partial \phi}{\partial \theta_\nu}\right).
\end{equation}
Clearly this is the same as Eq.~\eqref{eq:FIM_AS}.
Because the FIM is preserved, the new cost [Eq.~\eqref{eq:cost_AS}] constitutes an isometric embedding of the model manifold and no information is lost (in the sense of the Fisher Information).
		\subsubsection{Results}
		\label{sec:AS_results}
		We implement the new metric for the FitzHugh-Nagumo model as an example; results are shown in Fig.~\ref{fig:resultsfig}(a-c).
The local minima in the cost surface in Fig.~\ref{fig:bigfig}(d) have been eliminated, the winding frequency of the stiffest direction is significantly reduced, and the manifold is no longer highly curved (see also Appendix \ref{sec:convergence}).
Because the new cost function is an isometric embedding [i.e., preserves the Hessian in Eq.~\eqref{eq:hessian}], the curvature of the cost surface at the bottom of the bowl is the same as that in Fig.~\ref{fig:bigfig}(d).

\begin{figure*}
	\includegraphics[width=\textwidth]{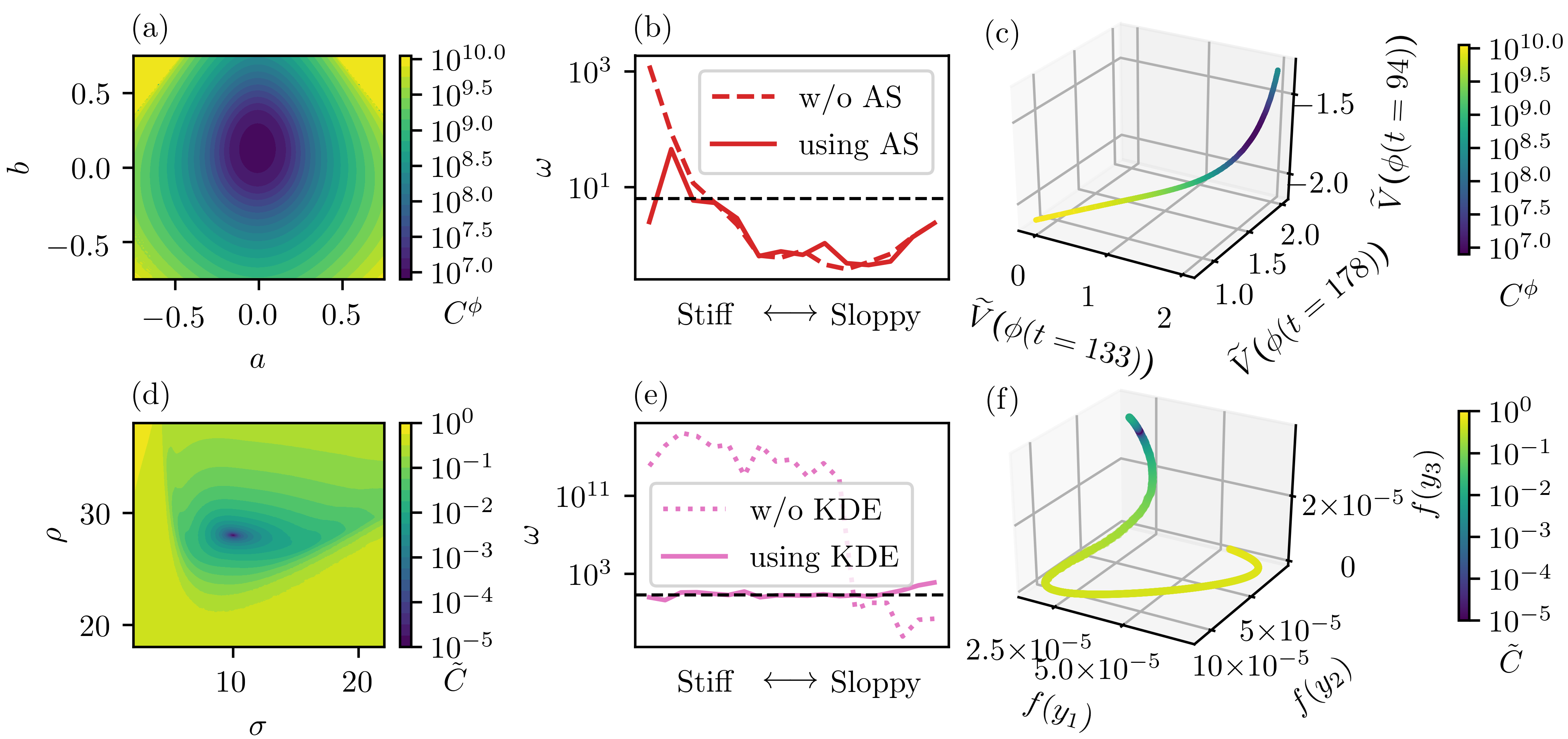}
	\caption{
		\textbf{Effects of alternative metrics} on cost surfaces (\nth{1} column), winding frequencies (\nth{2} column), and manifolds (\nth{3} column), (a-c) using analytic signal (AS) and (d-f) using kernel density estimation (KDE).
		Compare (a), (c), (d), and (f) with Fig.~\ref{fig:bigfig}(d), (f), (g), and (i), respectively.
		(b) and (e) show both the winding frequencies shown previously in Fig.~\ref{fig:freqs} (``w/o \_\_\_'') and the winding frequencies that result when using the new metrics (``using \_\_\_'') for comparison.
		Note that long time series are needed to achieve these results; see Appendix \ref{sec:convergence} for details.
		}
	\label{fig:resultsfig}
\end{figure*}
	\subsection{Kernel density estimation (KDE)}
	\label{sec:KDE}
	The high effective dimensionality of sloppy models of the third kind cannot be eliminated using the new metric discussed in Sec.~\ref{sec:AS}.
Adjusting the phase of a chaotic time series is insufficient to account for the variation in predictions as one moves from point to point in parameter space [resulting in the apparent roughness of the cost surface illustrated in Fig.~\ref{fig:bigfig}(g)].
This is reflected in the exponential sensitivities of chaotic systems at long times and is connected with a fundamental difference in manifold structure between sloppy models of the second and third kinds.
Note from Eq.~\eqref{eq:wind_freq} that winding frequency is directly proportional to geodesic/extrinsic curvature.
Figure \ref{fig:freqs} shows that the manifolds of sloppy models of the second kind only have high (extrinsic) curvature in one direction (like a scroll of paper).
This high curvature can be eliminated through an isometric embedding (analogous to unwinding the scroll).
By contrast, sloppy models of the third kind have high extrinsic curvature in more than one dimension.
High extrinsic curvature in multiple dimensions is necessarily associated with some intrinsic curvature, and this intrinsic curvature cannot be eliminated through an isometric embedding (think of a globe, which can't be ``unwound'' and laid flat).

The sensitivities of chaotic time series to parameters (including initial conditions) make time series prediction in sloppy models of the third kind impractical at long times.
However, model predictions $y_i(\theta)$ in phase space do give rise to a predictable distribution $f(y, \theta)$ \cite{lasota2013chaos}.
We evolve an ensemble of initial conditions and use the result to approximate this distribution with a \emph{kernel density estimate} \cite{rosenblatt1956remarks, parzen1962estimation},
\begin{equation}
	\label{eq:kde}
	f(y, \theta) \approx \frac{1}{nh}\sum_{i = 1}^n{K\left(\frac{y - y_i(\theta)}{h}\right)},
\end{equation}
where $n$ is the number of predictions/observations, $K(\cdot)$ is a kernel function, and $h$ is the kernel bandwidth.
A natural metric to use for distributions is the \emph{Hellinger distance},
\begin{equation}
	\label{eq:cost_KDE}
	\tilde{C}(\theta) \equiv \frac{1}{2} \int dy \left(\sqrt{f(y, \theta_0)} - \sqrt{f(y, \theta)}\right)^2,
\end{equation}
because it is a quadratic form, which can be interpreted as a Euclidean embedding space.
It also induces a metric on the model manifold that is given by the Fisher Information Metric.

We implement this cost for the Lorenz system; results are shown in Fig.~\ref{fig:resultsfig}(d-f).
The ``rough'' cost surface of Fig.~\ref{fig:bigfig}(g) has been replaced with a basin, the high winding frequencies have all disappeared, and the manifold is regular (see also Appendix \ref{sec:convergence}).

\section{Conclusion}
\label{sec:conclusion}
Multimodality in comparing and training multi-parameter models is a common problem \cite{mendes1998non, moles2003parameter, annan2004efficient, lorenc20074d, ramsay2007parameter}.
Many common search algorithms find only a local minimum (a point in parameter space which locally minimizes the chosen distance measure) and not the global one.
Even with global search methods the possibility of local minima reduces confidence that the global minimum will be found.
Here, we have shown how the choice of distance measure affects the number of local minima.
We have quantified this effect using curvature on the model manifold and introduced the \emph{winding frequency} to estimate the density of local minima in parameter space.
Finally, we have shown that through an appropriate choice of metric, the model manifold can be systematically ``unwound'' to remove local minima while preserving relevant physical interpretations of distance.

In this paper we have studied systems for which the relevant structures were known a priori (e.g., limit cycles and strange attractors).
However, the metrics we propose may also be useful for identifying previously unknown structures in other complex systems.

One of the ongoing challenges for many complex systems is the development of appropriate reduced-order representations \cite{anderson1972more, theodoropoulos2000coarsstabil, machta2013parameter, transtrum2016bridging}.  
More broadly, it has been suggested that the existence of useful simplified models is often due to a systematic compression of parameter space \cite{machta2013parameter}.
Compressing the parameter space leads to a type of ``universality class'' in which models with different parameter values make statistically indistinguishable predictions.
This line of work has also lead to new methods for constructing simplified models from more complex and complete mechanistic representations \cite{transtrum2014model}.
Ultimately, this compression is a consequence of the similarity metric used to compare model behaviors.  

For sloppy models of the first kind (which have previously dominated the literature), the compression ``squashes'' some dimensions to be very thin (as in Fig.~\ref{fig:bigfig}(a) and Refs.~\cite{transtrum2010nonlinear, transtrum2011geometry}), leading to a universality class of continuously-connected parameters for which reduced-order models can be systematically derived \cite{transtrum2014model}.
In contrast, for sloppy models of the second and third kind, the manifold is wound tightly, so that a compression leads to a manifold folding in which non-contiguous regions of the manifold are identified as part of the same universality class.
It is unlikely that predictive reduced-order models can be found for sloppy models with high winding frequencies as this would imply the existence, for example, of accurate long-term weather forecasts.
High winding frequency is the information-geometric equivalent of sensitivity to microscopic details (such as frequency, initial conditions, or other parameters), well-studied in chaotic systems.
In contrast, by unwinding the model manifold using an alternative metric, the manifold is transformed into a hyper-ribbon and this extreme-sensitivity to microscopic details is removed.

Understanding how effective theories emerge at long time scales is a challenging problem that has drawn on sophisticated expertise from a variety of fields, including dynamical systems \cite{gear2003equatfree, desroches2012mixed}, signal processing \cite{kramer1984sensitivity, wilkins2009sensitivity}, statistics \cite{lamont2017information, mattingly2018maximizing}, and optimization \cite{ramsay2007parameter, lorenc20074d}.
In this work we have combined insights from these other domains with tools of information geometry.
Our hope is that this explicit connection will bring new tools, such as sloppy model analysis and the manifold boundary approximation method, to bear on a wide range of important, ongoing scientific problems.

\begin{acknowledgments}
We thank A. Stankovic for helpful comments.
Resources at the Fulton Supercomputing Lab at Brigham Young University were used for many of the computations.
BLF and MKT were supported by the National Science Foundation through grant NSF EPCN-1710727.
\end{acknowledgments}

\bibliographystyle{aps}
\bibliography{AS_bib}

\newpage
\appendix{}

\section{Models}
\label{sec:models}
Following are the models examined in Figs.~\ref{fig:bigfig}, \ref{fig:eigvals}, \ref{fig:freqs}, and \ref{fig:resultsfig}.
In some cases, additional polynomial terms (with parameters as coefficients) were added to the model equations of motion.
This allows calculation of the \emph{structural susceptibility} of the model, that is, susceptibility to perturbations of the underlying dynamics \cite{chachra2012structural}.
These terms can be thought of as representing details of the real system that have been left out of the model.
\begin{enumerate}[label=\emph{\Alph*})]
	\item A sum of decaying exponentials leading to a steady state:
		\begin{subequations}
			\label{eq:exp_model}
			\begin{align}
				y(t; \theta) &= \theta_1 + \sum_{n = 2}^{N}{e^{-\gamma_n t}} \\
				\gamma_n & \equiv \sum_{i = 2}^{n}{\theta_i}.
			\end{align}
		\end{subequations}
		Using $\theta_i$ as the parameters of the model, rather than using the decay rates $\gamma_n$ directly, guarantees that the decay rates are ordered, breaking the symmetry between them.
	\item A rational polynomial model:
		\begin{equation}
			\label{eq:thurber_model}
			y(t; \theta) = \frac{\theta_1 + \theta_2 t + \theta_3 t^2 + \theta_4 t^3}{1 + \theta_5 t + \theta_6 t^2 + \theta_7 t^3}.
		\end{equation}
	\item We used the IFFLP model of biological adaptation described in \cite{ma2009defining}.
	\item The FitzHugh-Nagumo model \cite{fitzhugh1961impulses, nagumo1962active, ramsay2007parameter} can be written as:
		\begin{subequations}
			\label{eq:fn_model}
			\begin{align}
				\dot{V} & = c\left(V - \frac{V^3}{3} + R\right) \\
				\dot{R} & = -\frac{1}{c}(V - a + bR).
			\end{align}
		\end{subequations}
		Initial conditions used were $(V_0, R_0) = (-1, 1)$.
	\item We implemented the Hodgkin-Huxley model described in \cite{hodgkin1952quantitative}.
	\item We used the Wnt oscillator model described in \cite{jensen2010wnt}.
	\item The Lorenz system \cite{lorenz1963deterministic} is given by:
		\begin{subequations}
			\label{eq:lorenz_model}
			\begin{align}
				\dot{x} & = \sigma(y - x) \\
				\dot{y} & = x(\rho - z) - y \\
				\dot{z} & = xy - \beta z.
			\end{align}
		\end{subequations}
		Initial conditions used were $(x_0, y_0, z_0) = (1, 1, 10)$.
		Additional parameters for rescaling $x$, $y$, and $z$ after solving the ODE were also included to illustrate that all parameters in a chaotic system need not exhibit an exponential sensitivity (see Fig.~\ref{fig:bigfig}(h)).
		In general, these parameters could account for differences in units between the model and the observations (if there were any).
	\item The Hindmarsh-Rose model \cite{hindmarsh1984model, wang1993genesis} can be written as:
		\begin{subequations}
			\label{eq:hr_model}
			\begin{align}
				\dot{x} & = y - a x^3 + b x^2 - z + I \\
				\dot{y} & = c - d x^2 - y \\
				\dot{z} & = \epsilon\left(x - \frac{1}{s}(z - z_R)\right).
			\end{align}
		\end{subequations}
		Initial conditions used were $(x_0, y_0, z_0) = (-0.216272..., 0.183969..., 0.066920...)$.
	\item The equations of motion for a damped, driven pendulum (derivable using Newton's \nth{2} law) are:
		\begin{subequations}
			\label{eq:ddp_model}
			\begin{align}
				\dot{\theta} & = \omega \\
				\dot{\omega} & = -\frac{\omega}{Q} - \sin(\theta) + A\cos(\phi) \\
				\dot{\phi}   & = \omega_D.
			\end{align}
		\end{subequations}
		Initial conditions used were $(\theta_0, \omega_0, \phi_0) = (-2, 0, 0)$.
\end{enumerate}
\section{Alternatives for obtaining a phase}
\label{sec:alt_phase}
In some cases, Eq.~\eqref{eq:AS_phase} cannot be used to obtain a monotonically increasing phase.
For example, some oscillatory behavior does not have a unique center of oscillation.
If that is the case, one approach is to decompose the signal using \emph{empirical mode decomposition} into a number of \emph{intrinsic mode functions}, for each of which a separate phase may then be defined \cite{huang1998empirical}.
However, because this method is empirical, the decomposition may not vary smoothly with the parameters of the model, leading to discontinuities in the cost function.

Even when the oscillatory behavior does have a single center of oscillation, in practice the Hilbert transform must be implemented numerically (especially for observational data).
This usually involves a fast Fourier transform, which can introduce unwanted effects in the phase due to the Gibbs phenomenon (see Fig.~\ref{fig:end_effects}).
The impact of end effects can be reduced by leaving the ends out of the cost function, or through windowing.

\begin{figure}[htbp]
	\includegraphics[width=1.0\columnwidth]{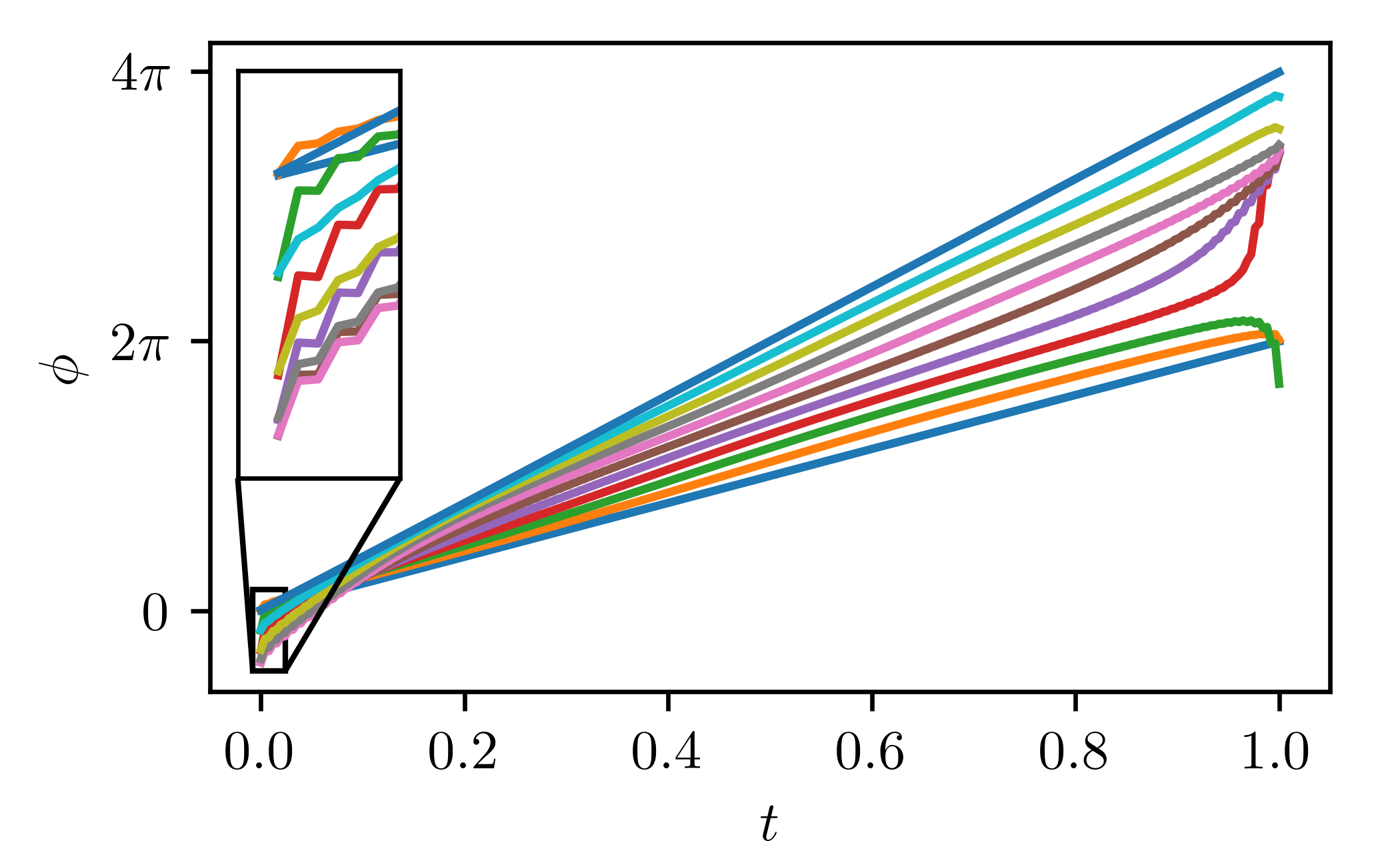}
	\caption{
		Phases obtained when implementing the Hilbert transform numerically on the model $y(t, \theta) = A\cos(\omega t)$, for $\omega$ ranging from $2\pi$ to $4\pi$.
		The effects of the Gibbs phenomenon can be seen near the ends for some values of $\omega$.
		}
	\label{fig:end_effects}
\end{figure}

More generally, any monotonically increasing function of time may be used for a phase, provided it has the appropriate frequency.
One proposal is to use
\begin{equation}
	\phi(t) = \omega t + \phi_0,
\end{equation}
and to estimate a value of $\omega$ from the oscillatory signal.
This may be done by fitting a line to the phase obtained from Eq.~\eqref{eq:AS_phase} or by using a Fourier transform to decompose the signal into frequency components and selecting one.

We also suggest the following method of obtaining a phase (found in \cite{pikovsky1997phase}) that does not require the use of the Hilbert transform.
It is sometimes the case that two signals, $y_1(t)$ and $y_2(t)$, can be selected from the dynamical variables $y(t)$ of a system and used to calculate a phase as follows:
\begin{equation}
	\phi(t) = \arctan\left(\frac{y_2(t)}{y_1(t)}\right).
\end{equation}
The only requirement is that the combined signal correspond to a \emph{proper rotation}, which has both a definite direction and unique center of rotation, so that the phase will be monotonically increasing \cite{yalccinkaya1997phase, kiss2001phase}.
For example, in some cases, a signal $y(t)$ and its time derivative $\dot{y}(t)$ may be used:
\begin{equation}
	\phi(t) = \arctan\left(\frac{y(t)}{\dot{y}(t)}\right).
\end{equation}
\section{Covariance matrices}
\label{sec:covariance}
We consider how uncertainty in experimental observations propagates to phases calculated using Eq.~\eqref{eq:AS_phase}.
First, we define more precisely the covariance matrix $\Sigma^{y(t)}$ for the observations with time as the independent variable.
Let $\xi_i$ denote random variables drawn from the normal distribution $\mathcal{N}(0, 1)$.
We assume the observations $y_i$ are random variables that are normally distributed about the predictions $y(t_i; \theta_0)$ of the model at the best fit, with standard deviation given by the uncertainties $\sigma_i$, and write
\begin{equation}
	\label{eq:data}
	y_i = y(t_i; \theta_0) + \sigma_i \xi_i.
\end{equation}
The deviations
\begin{equation}
	\label{eq:devs_t_app}
	\delta y_i \equiv y_i - y(t_i; \theta)
\end{equation}
vary with the predictions of the model, but at the best fit they are random variables with mean $0$ and standard deviation $\sigma_i$:
\begin{equation}
	\label{eq:devs_t_rand}
	\delta y_i(\theta_0) = y(t_i; \theta_0) + \sigma_i \xi_i - y(t_i; \theta_0) = \sigma_i \xi_i.
\end{equation}
The elements of the covariance matrix are defined as the expectation of the product of deviations at the best fit:
\begin{align}
	\Sigma^{y(t)}_{ij} & \equiv \left\langle \delta y_i \delta y_j \right\rangle \nonumber\\
	                   &      = \left\langle \sigma_i \xi_i \sigma_j \xi_j \right\rangle \nonumber\\
										 &      = \sigma_i \sigma_j \left\langle \xi_i \xi_j \right\rangle.
\end{align}
The matrix is diagonal if the deviations are independent (i.~e., if $\left\langle \xi_i \xi_j \right\rangle = \delta_{ij}$).
	\subsection{Covariance matrix for phase}
	\label{sec:cov_phase_vs_time}
	Next, we derive the covariance matrix for the phases.
The observations $y_i$ are assumed to have occurred at the phases $\phi(t_i; \theta_0)$ predicted by the model.
These phases will differ from the phases $\phi_i$ calculated using Eq.~\eqref{eq:AS_phase} due to the presence of noise in the observations.
We define the deviations of the phases as
\begin{equation}
	\label{eq:devs_phi_app}
	\delta \phi_i \equiv \phi_i - \phi(t_i; \theta).
\end{equation}
Note that, due to the presence of the Hilbert transform in Eq.~\eqref{eq:AS_phase}, the phase $\phi(t)$ has a functional dependence on the signal $y(t)$, i.e., $\phi(t) = \phi[y](t)$.
We use this functional dependence and Eq.~\eqref{eq:devs_t_app} to relate $\delta \phi_i$ to $\delta y_i$:
\begin{align}
	\phi_i & = \phi_i[y] \nonumber\\
				 & = \phi_i[y(t; \theta) + \delta y] \nonumber\\
				 & \approx \phi_i[y(t; \theta)] + \sum_j{\frac{\partial \phi_i[y(t; \theta)]}{\partial y_j}\delta y_j} \nonumber\\
				 & = \phi(t_i; \theta) + \sum_j{\frac{\partial \phi_i}{\partial y_j}\delta y_j}, \\
	\label{eq:devs_phi}
	\delta \phi_i & \approx \sum_j{\frac{\partial \phi_i}{\partial y_j}\delta y_j}.
\end{align}
In the fourth line we have simplified the notation for clarity, and we have kept only the first order terms.
This approximation is valid near the best fit where $\delta y_i$ is small.
At the best fit, we have
\begin{equation}
	\label{eq:devs_phi_rand}
	\delta \phi_i(\theta_0) = \sum_j{\frac{\partial \phi_i}{\partial y_j}\sigma_j \xi_j},
\end{equation}
which shows that $\delta \phi_i(\theta_0)$ are random variables with mean $0$.

Before proceeding, the derivative $\partial \phi_i/\partial y_j$ merits some attention.  First, we note that it may be evaluated using either $y(t_i; \theta)$ or $y_i$ to first order in $\delta y_i$:
\begin{align}
	\frac{\partial \phi_i[y(t_j; \theta)]}{\partial y_j}\delta y_j
		& = \frac{\partial \phi_i[y_j - \delta y_j]}{\partial y_j}\delta y_j \nonumber\\
		& = \frac{\partial \phi_i[y_j]}{\partial y_j}\delta y_j + O(\delta y^2)
\end{align}

Second, using Eq.~\eqref{eq:AS_phase}, we can derive an explicit expression for $\partial \phi_i/\partial y_j$:
\begin{align}
	\frac{\partial \phi_i}{\partial y_j} & = \frac{\partial}{\partial y_j}\left[\tan^{-1}\left(\frac{H_i[y]}{y_i}\right)\right] \nonumber\\
	\label{eq:dphdy_middle}
		& = \frac{1}{1 + \left(H_i[y]/y_i\right)^2}\left(\frac{1}{y_i}\frac{\partial H_i[y]}{\partial y_j} - \frac{H_i[y]}{y_i^2}\frac{\partial y_i}{\partial y_j}\right).
\end{align}
($H_i[y]$ is understood to mean the $i$th component of the Hilbert transform of $y$.)
To evaluate the derivative $\partial H_i[y]/\partial y_j$, we use the definition of the derivative and the linearity of the Hilbert transform:
\begin{align}
	\frac{\partial H_i[y]}{\partial y_j} & = \lim_{h \to 0}\frac{H_i[y + h\delta_j] - H_i[y]}{h} \nonumber\\
																			 & = \lim_{h \to 0}\frac{H_i[y] + h H_i[\delta_j] - H_i[y]}{h} \nonumber\\
																			 & = H_i[\delta_j].
\end{align}
(We are using $\delta_j$ to denote the vector formed by taking the $j$th column of the Kronecker delta $\delta_{ij}$ when considered as a matrix.)
Plugging this into Eq.~\eqref{eq:dphdy_middle} gives
\begin{equation}
	\label{eq:dphdy}
	\frac{\partial \phi_i}{\partial y_j} = \frac{y_i H_i[\delta_j] - H_i[y] \delta_{ij}}{y_i^2 + H_i[y]^2}.
\end{equation}

Third, the matrix $\partial \phi/\partial y$ defined by Eq.~\eqref{eq:dphdy} is singular (i.~e., it has at least one zero eigenvalue).
As we now show, this is because changes in the amplitude of an oscillation do not affect the phase.
\begin{thm*}
	The matrix $\partial \phi/\partial y$, whose $ij$th element is
	\begin{equation*}
		\frac{\partial \phi_i}{\partial y_j} = \frac{y_i H_i[\delta_j] - H_i[y] \delta_{ij}}{y_i^2 + H_i[y]^2}
	\end{equation*}
	has at least one zero eigenvalue, corresponding to the eigenvector $\delta y^* = y$.
\end{thm*}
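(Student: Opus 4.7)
My plan is to prove the theorem by a direct computation of the matrix-vector product $(\partial \phi/\partial y)\, y$, showing every component vanishes. The motivating intuition is that the phase defined by $\phi = \arctan(H[y]/y)$ is homogeneous of degree zero in $y$: if we rescale $y \to (1+\epsilon)y$, both numerator and denominator scale by the same factor and the ratio is unchanged, so $\phi$ is invariant. Since an infinitesimal perturbation in the direction $\delta y^* = y$ corresponds precisely to such a uniform rescaling, we should expect $\sum_j (\partial \phi_i/\partial y_j)\, y_j = 0$ for every $i$, which is exactly the eigenvalue equation we want.

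Concretely, I would substitute the given formula for $\partial\phi_i/\partial y_j$ into $\sum_j (\partial\phi_i/\partial y_j)\, y_j$ and factor out the common denominator $y_i^2 + H_i[y]^2$, leaving a numerator with two terms to evaluate separately. The second term is immediate: $\sum_j H_i[y]\,\delta_{ij}\, y_j = H_i[y]\, y_i$. The first term requires a brief observation: by the linearity of the Hilbert transform, $\sum_j y_j H_i[\delta_j] = H_i\!\bigl[\sum_j y_j \delta_j\bigr] = H_i[y]$, so the first term equals $y_i H_i[y]$. The two contributions then cancel, yielding zero as desired.

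The only step that requires real care is the interchange $\sum_j y_j H_i[\delta_j] = H_i[\sum_j y_j \delta_j]$. In the strictly discrete setting this is just linearity of a matrix (the discrete Hilbert transform matrix) acting on a vector expansion in the standard basis $\{\delta_j\}$, so it is automatic. If instead one wishes to interpret $H$ as the continuum principal-value integral while still parametrizing perturbations by pointwise values $y_j$ at sample times $t_j$, one must be slightly careful about what $H_i[\delta_j]$ means — most naturally as the discretized Hilbert transform matrix element, consistent with the numerical implementation discussed earlier. Either way, linearity is the only property used.

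I expect no substantive obstacle; the proof should occupy only a few lines. The main thing to watch is notational hygiene — clearly separating the ``column-of-Kronecker-delta'' vector $\delta_j$ (the unit vector at index $j$) from the symbol $\delta_{ij}$ appearing in the second term of the numerator, since both appear in the same calculation. After the cancellation, I would conclude by stating explicitly that $y$ is therefore a null eigenvector of $\partial\phi/\partial y$, which is the content of the theorem, and remark that this reflects the invariance of the instantaneous phase under global amplitude rescaling.
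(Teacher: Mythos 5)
Your proposal is correct and follows essentially the same route as the paper: substitute the explicit formula, use linearity of the Hilbert transform to collapse $\sum_j y_j H_i[\delta_j]$ into $H_i[y]$, and observe the two numerator terms cancel. The homogeneity-of-degree-zero intuition you add matches the paper's own remark that amplitude changes do not affect the phase.
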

\begin{proof}
	\begin{align*}
		\sum_j{\frac{\partial \phi_i}{\partial y_j} \delta y_j^*}
			& = \sum_j{\frac{y_i H_i[\delta_j] - H_i[y] \delta_{ij}}{y_i^2 + H_i[y]^2} y_j} \\
			& = \frac{y_i H_i\left[\textstyle{\sum_j}{\delta_j y_j}\right] - H_i[y] \textstyle{\sum_j}{\delta_{ij} y_j}}{y_i^2 + H_i[y]^2} \\
			& = \frac{y_i H_i[y] - H_i[y] y_i}{y_i^2 + H_i[y]^2} \\
			& = 0. \qedhere
	\end{align*}
\end{proof}
Any change in amplitude at constant phase is a multiple of $y$ and thus also lies in the null space of $\partial \phi/\partial y$.

Returning to Eq.~\eqref{eq:devs_phi}, we derive an expression for the covariance matrix $\Sigma^{\phi(t)}$ for the phases:
\begin{align}
	\Sigma^{\phi(t)}_{ij} & \equiv \left\langle \delta \phi_i \delta \phi_j \right\rangle \nonumber\\
												& = \left\langle \sum_k{\frac{\partial \phi_i}{\partial y_k} \delta y_k} \sum_l{\frac{\partial \phi_j}{\partial y_l} \delta y_l} \right\rangle \nonumber\\
												& = \sum_{k, l}{\frac{\partial \phi_i}{\partial y_k} \left\langle \delta y_k \delta y_l \right\rangle \frac{\partial \phi_j}{\partial y_l}} \nonumber\\
												& = \sum_{k, l}{\frac{\partial \phi_i}{\partial y_k} \Sigma^{y(t)}_{kl} \frac{\partial \phi_j}{\partial y_l}} \\
	\Sigma^{\phi(t)} & = \frac{\partial \phi}{\partial y} \Sigma^{y(t)} \frac{\partial \phi}{\partial y}^T. \label{eq:cov_phi_t}
\end{align}
This shows how uncertainties $\sigma_{y_i}^2 = \Sigma^{y(t)}_{ii}$ in the observations are propagated to uncertainties $\sigma_{\phi_i}^2 = \Sigma^{\phi(t)}_{ii}$ in the phases of the observations (see Fig.~\ref{fig:sin_stats}).

\begin{figure}[htbp]
	\includegraphics[width=1.0\columnwidth]{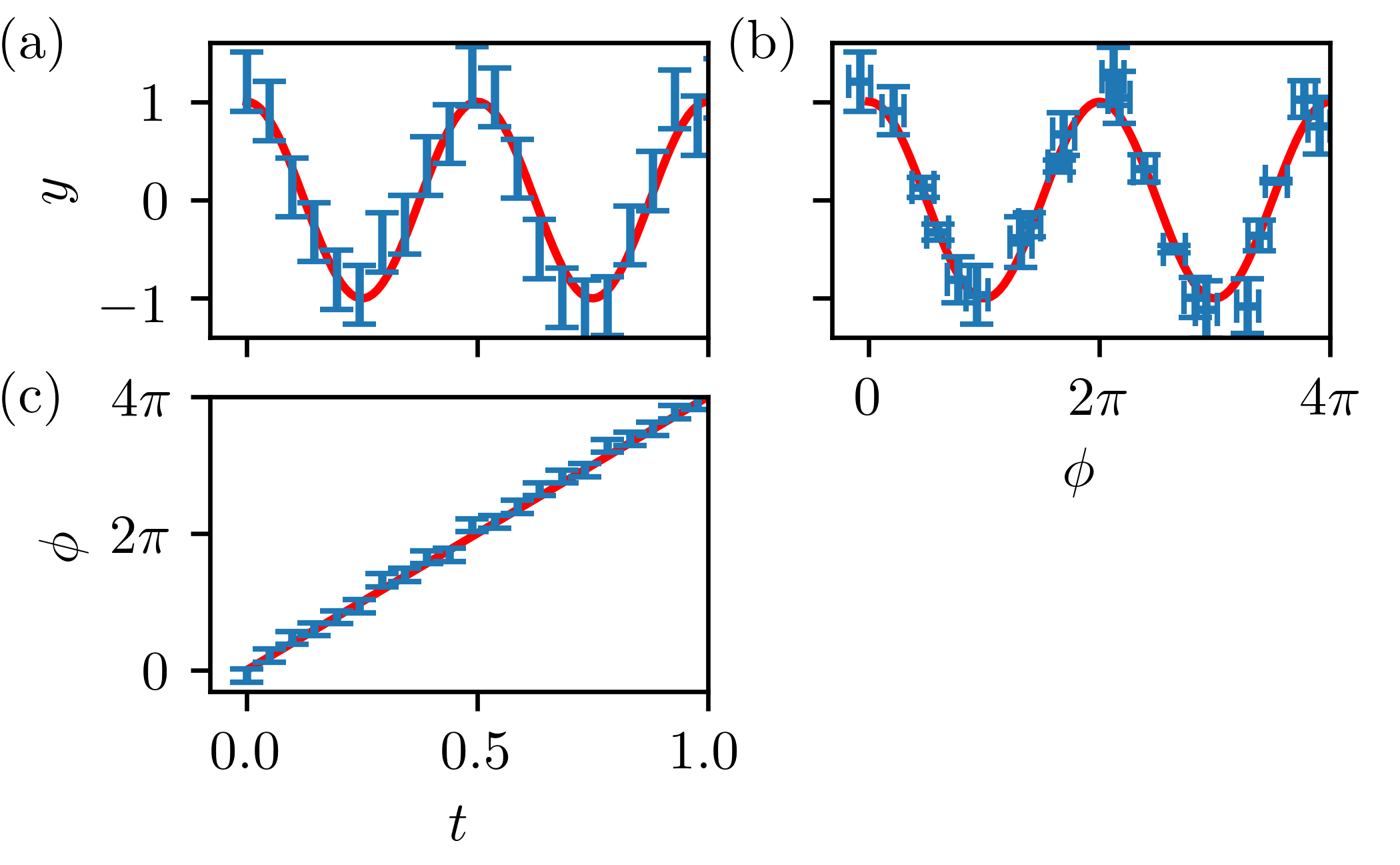}
	\caption{
		\textbf{Propagation of uncertainty.}
		(a) Data (blue) simulated from the model $y(t) = A\cos(\omega t)$ (red) by adding uniform Gaussian noise.
		Error bars indicate uncertainty.
		(b) Data (blue) plotted as a function of phase compared with $y(\phi) = A\cos(\phi)$ (red).
		Error bars indicate the uncertainties obtained using Eqs.~\eqref{eq:cov_phi_t} and \eqref{eq:cov_y_phi}.
		(c) Phase (blue), obtained for each data point using Eq.~\eqref{eq:AS_phase}, compared with $\phi(t) = \omega t$ (red).
		Error bars indicate the uncertainties obtained using Eq.~\eqref{eq:cov_phi_t}.
		}
	\label{fig:sin_stats}
\end{figure}
	\subsection{Covariance matrix for observations as a function of phase}
	\label{sec:cov_y_vs_phase}
	Finally, we derive the covariance matrix $\Sigma^{y(\phi)}$ for the observations with phase as the independent variable.
We define the deviations of the observations from the predictions at constant phase as
\begin{equation}
	\label{eq:devs_ytilde_app}
	\delta \tilde{y}_i \equiv y_i - \tilde{y}(\phi_i; \theta).
\end{equation}
We can relate these to $\delta y_i$ and $\delta \phi_i$ using Eqs.~\eqref{eq:devs_phi_app} and \eqref{eq:time_to_phase}:
\begin{align}
	\delta \tilde{y}_i & = y_i - \tilde{y}(\phi_i; \theta) \nonumber\\
		& = y_i - \tilde{y}\bm{(}\phi(t_i; \theta) + \delta \phi_i; \theta\bm{)} \nonumber\\
		& \approx y_i - \tilde{y}\bm{(}\phi(t_i; \theta), \theta\bm{)} - \frac{\partial \tilde{y}\bm{(}\phi(t_i; \theta); \theta\bm{)}}{\partial \phi}\delta \phi_i \nonumber\\
		& = y_i - y(t_i; \theta) - \left(\frac{\partial \tilde{y}}{\partial \phi}\right)_i\delta \phi_i \nonumber\\
		& = \delta y_i - \left(\frac{\partial \tilde{y}}{\partial \phi}\right)_i\delta \phi_i. \label{eq:devs_ytilde}
\end{align}
In light of Eqs.~\eqref{eq:devs_t_rand} and \eqref{eq:devs_phi_rand}, $\delta \tilde{y}_i$ also has mean $0$ at the best fit.
Note that, similar to $\partial \phi_i/\partial y_j$, $\partial \tilde{y}/\partial \phi$ may be evaluated using either $\phi_i$ or $\phi(t_i; \theta)$ to first order in $\delta \phi_i$:
\begin{align}
	\frac{\partial \tilde{y}\bm{(}\phi(t_i; \theta); \theta\bm{)}}{\partial \phi}\delta \phi_i
		& = \frac{\partial \tilde{y}(\phi_i - \delta \phi_i; \theta)}{\partial \phi}\delta \phi_i \nonumber\\
		& = \frac{\partial \tilde{y}(\phi_i; \theta)}{\partial \phi}\delta \phi_i + \mathcal{O}(\delta \phi^2)
\end{align}

We can take Eq.~\eqref{eq:devs_ytilde} a step further using Eq.~\eqref{eq:devs_phi}:
\begin{align}
	\delta \tilde{y}_i & = \delta y_i - \left(\frac{\partial \tilde{y}}{\partial \phi}\right)_i\delta \phi_i \nonumber\\
												 & = \delta y_i - \sum_j{\left(\frac{\partial \tilde{y}}{\partial \phi}\right)_i\frac{\partial \phi_i}{\partial y_j}\delta y_j} \nonumber\\
												 & = \sum_j{\left[\delta_{ij} - \left(\frac{\partial \tilde{y}}{\partial \phi}\right)_i\frac{\partial \phi_i}{\partial y_j}\right]\delta y_j} \nonumber\\
												 & \equiv \sum_j{D_{ij}\delta y_j}.
\end{align}
Taking the expectation of pairs of deviations $\delta \tilde{y}_i$, we obtain
\begin{align}
	\Sigma^{y(\phi)}_{ij} & \equiv \left\langle \delta \tilde{y}_i \delta \tilde{y}_j \right\rangle \nonumber\\
												& = \left\langle \sum_k{D_{ik}\delta y_k} \sum_l{D_{jl}\delta y_l} \right\rangle \nonumber\\
												& = \sum_{k,l}{D_{ik} \left\langle \delta y_k \delta y_l \right\rangle D_{jl}} \nonumber\\
												& = \sum_{k,l}{D_{ik} \Sigma^{y(t)}_{kl} D_{jl}} \\
	\Sigma^{y(\phi)} & = D \Sigma^{y(t)} D^T. \label{eq:cov_y_phi}
\end{align}
This gives us a way to compute the uncertainties $\sigma_{\tilde{y}_i}^2 = \Sigma^{y(\phi)}_{ii}$ of the observations when taking phase as the independent variable instead of time (see Fig.~\ref{fig:sin_stats}).
\section{Parameter sensitivities}
\label{sec:derivatives}
Here we derive the first- and second-order parameter sensitivities of $\tilde{y}$ and $\phi$ that are used in calculating the FIM and winding frequencies for the analytic signal-based metric of Sec.~\ref{sec:AS}.
We begin with Eq.~\eqref{eq:amp_and_phase},
\begin{equation}
	\label{eq:amp_and_phase2}
	y(t; \theta) = A(t; \theta)\cos\bm{(}\phi(t; \theta)\bm{)},
\end{equation}
and differentiate it with respect to $\theta_\mu$:
\begin{equation}
	\frac{\partial y}{\partial \theta_\mu} = \frac{\partial A}{\partial \theta_\mu}\cos(\phi) - A\sin(\phi)\frac{\partial \phi}{\partial \theta_\mu}.
\end{equation}
Comparing with Eq.~\eqref{eq:decoupling2}, we now see that we have explicit expressions for $\partial \tilde{y}/\partial \theta_\mu$ and $\partial \tilde{y}/\partial \phi$ in terms of $A$, $\phi$, and $\partial A/\partial \theta_\mu$:
\begin{align}
	\left.\frac{\partial \tilde{y}}{\partial \theta_\mu}\right|_\phi &= \frac{\partial A}{\partial \theta_\mu}\cos(\phi) &
	\left.\frac{\partial \tilde{y}}{\partial \phi}\right|_\theta &= -A\sin(\phi) \nonumber\\
		&= \frac{y}{A}\frac{\partial A}{\partial \theta_\mu} &
		&= -H[y].
\end{align}
In the second line we have used the trigonometric relationships $\cos(\phi) = y/A$ and $\sin(\phi) = H[y]/A$ which are easily derived from Eqs.~\eqref{eq:AS_amp} and \eqref{eq:AS_phase}.
The second derivative of Eq.~\eqref{eq:amp_and_phase2} is
\begin{multline}
	\frac{\partial^2 y}{\partial \theta_\mu\partial \theta_\nu} = \frac{y}{A}\frac{\partial^2 A}{\partial \theta_\mu\partial \theta_\nu} - \frac{H[y]}{A}\left(\frac{\partial A}{\partial \theta_\mu}\frac{\partial \phi}{\partial \theta_\nu} + \frac{\partial \phi}{\partial \theta_\mu}\frac{\partial A}{\partial \theta_\nu}\right) \\
		- y\frac{\partial \phi}{\partial \theta_\mu}\frac{\partial \phi}{\partial \theta_\nu} - H[y]\frac{\partial^2 \phi}{\partial \theta_\mu\partial \theta_\nu}.
\end{multline}
Because the new analytic signal-based metric involves $\tilde{y}$ and $\phi$, we use only the first term (which is $\partial^2 \tilde{y}/\partial \theta_\mu\partial \theta_\nu$) and the last term in this expression when calculating the geodesic curvature.

Expressions for the sensitivities of $A$ and $\phi$ are obtained by differentiating Eqs.~\eqref{eq:AS_amp} and \eqref{eq:AS_phase}:
\begin{align}
	A &= \sqrt{y^2 + H^2[y]} & \phi &= \tan^{-1}\left(\frac{H[y]}{y}\right)
\end{align}
\begin{align}
	\frac{\partial A}{\partial \theta_\mu} &= \frac{1}{A}\left(y\frac{\partial y}{\partial \theta_\mu} + H[y]H\left[\frac{\partial y}{\partial \theta_\mu}\right]\right) \\
	\frac{\partial \phi}{\partial \theta_\mu} &= \frac{1}{A^2}\left(yH\left[\frac{\partial y}{\partial \theta_\mu}\right] - H[y]\frac{\partial y}{\partial \theta_\mu}\right)
\end{align}
\begin{multline}
	\frac{\partial^2 A}{\partial \theta_\mu\partial \theta_\nu} = A\frac{\partial \phi}{\partial \theta_\mu}\frac{\partial \phi}{\partial \theta_\nu} + \frac{1}{A}\left(y\frac{\partial^2 y}{\partial \theta_\mu\partial \theta_\nu}\right. \\
		\left.+ H[y]H\left[\frac{\partial^2 y}{\partial \theta_\mu\partial \theta_\nu}\right]\right)
\end{multline}
\begin{multline}
	\frac{\partial^2 \phi}{\partial \theta_\mu\partial \theta_\nu} = -\frac{1}{A}\frac{\partial A}{\partial \theta_\mu}\frac{\partial \phi}{\partial \theta_\nu} - \frac{1}{A}\frac{\partial \phi}{\partial \theta_\mu}\frac{\partial A}{\partial \theta_\nu} \\
		+ \frac{1}{A^2}\left(yH\left[\frac{\partial^2 y}{\partial \theta_\mu\partial \theta_\nu}\right] - H[y]\frac{\partial^2 y}{\partial \theta_\mu\partial \theta_\nu}\right)
\end{multline}
\section{Regularity of cost surfaces and manifolds}
\label{sec:convergence}
In Fig.~\ref{fig:resultsfig}, a sufficiently large number of time points was included in the cost and manifold calculations to demonstrate the results of using the new metrics in the limit of infinite time.
In practice, only a finite number of time points can be included.
Here we demonstrate the convergence of the FitzHugh-Nagumo manifold and the Lorenz cost as a function of the number of sampled time points.
In addition, we discuss the gradient of the FitzHugh-Nagumo cost, shown in Fig.~\ref{fig:resultsfig}(a), as it relates to the regularity of the new surface.

Figure \ref{fig:cost_grad} shows a plot of the magnitude of the gradient of the cost cross section shown in Fig.~\ref{fig:resultsfig}(a).
The significance of the gradient of the cost is that every local minimum of the cost will be a zero of the gradient.
If there are multiple local minima still present in the new cost, then the gradient will have multiple zeros.
We plot the magnitude of the gradient so that zeros can be found easily.
It is clear from Fig.~\ref{fig:cost_grad} that there is only one zero, so the new cost does, in fact, have a single minimum.

\begin{figure}[htbp]
	\includegraphics[width=1.0\columnwidth]{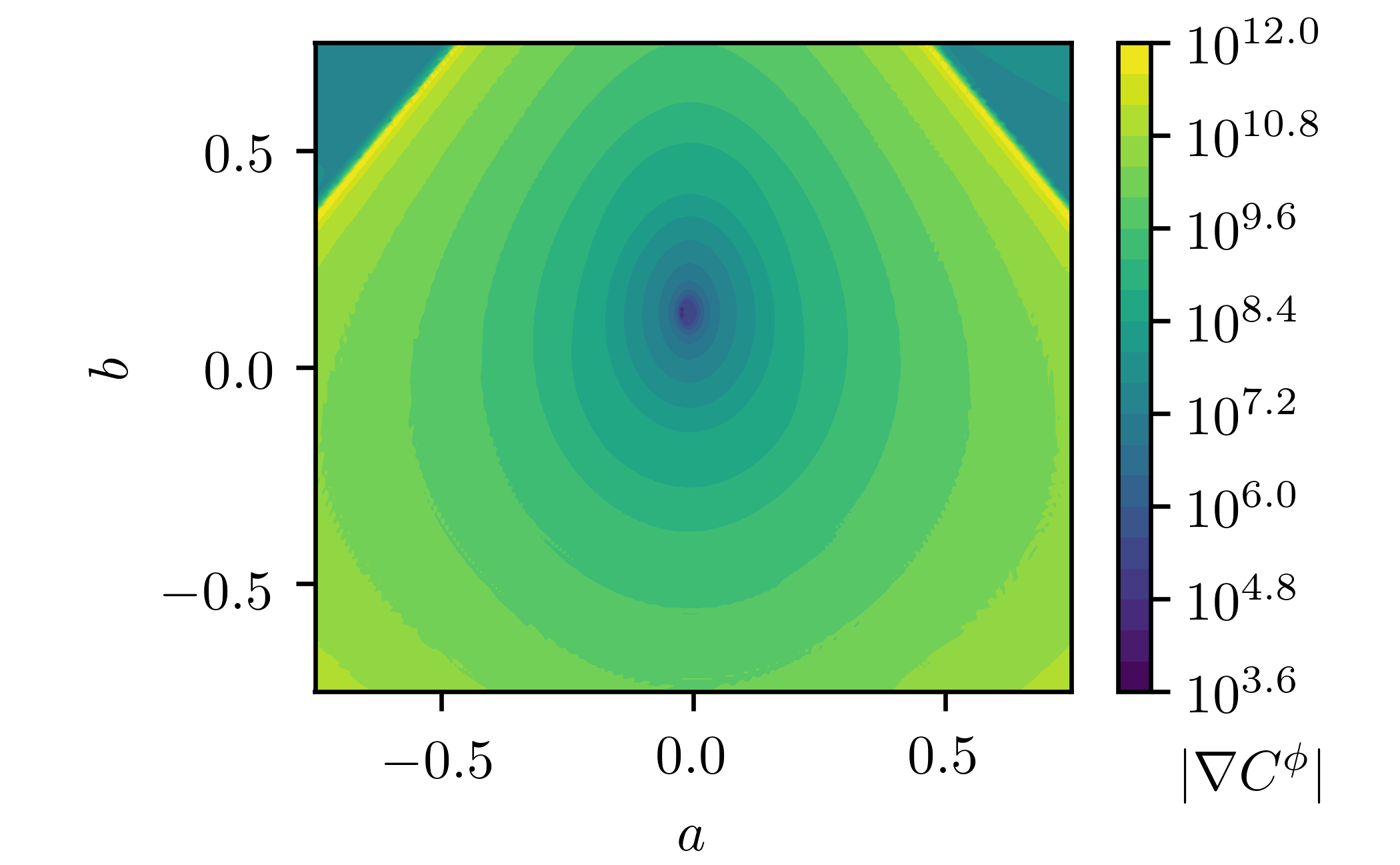}
	\caption{
		\textbf{Magnitude of the gradient of the FitzHugh-Nagumo cost} $|\nabla C^\phi(\theta)|$.
		The magnitude of the gradient has only one minimum, indicating that the cost cross section shown in Fig.~\ref{fig:resultsfig}(a) has a single minimum.
		The minimum of $|\nabla C^\phi(\theta)|$ shown is not quite zero because the actual minimum of Fig.~\ref{fig:resultsfig}(a) is between the grid points where $|\nabla C^\phi(\theta)|$ has been calculated.
		Note that in the upper corners of the plot, there is a phase transition to nonoscillatory behavior, where the methods of Sec.~\ref{sec:AS} cannot be applied effectively.
		The sharp apparent dropoff is due to such choices as having our algorithms return zeros rather than throw errors for these regions.
		}
	\label{fig:cost_grad}
\end{figure}
				
Figure \ref{fig:fn_conv} shows two projections of the FitzHugh-Nagumo manifold (signal predictions at constant phase): one calculated using about 24 time points per cycle in the original time series and the other using twice the time sampling of the first.
The manifold itself exhibits oscillations in both cases.
These oscillations are an artifact of the finite time sampling of the oscillatory signal predicted by the model.
As parameters that control frequency are varied, the peak of each cycle shifts between adjacent time points and the local amplitude appears to oscillate (see Fig.~\ref{fig:fn_oscillations}).
Hence the predicted signal values at a given (constant) phase also oscillate, resulting in the manifold oscillations observed.

\begin{figure}[htbp]
	\includegraphics[width=1.0\columnwidth]{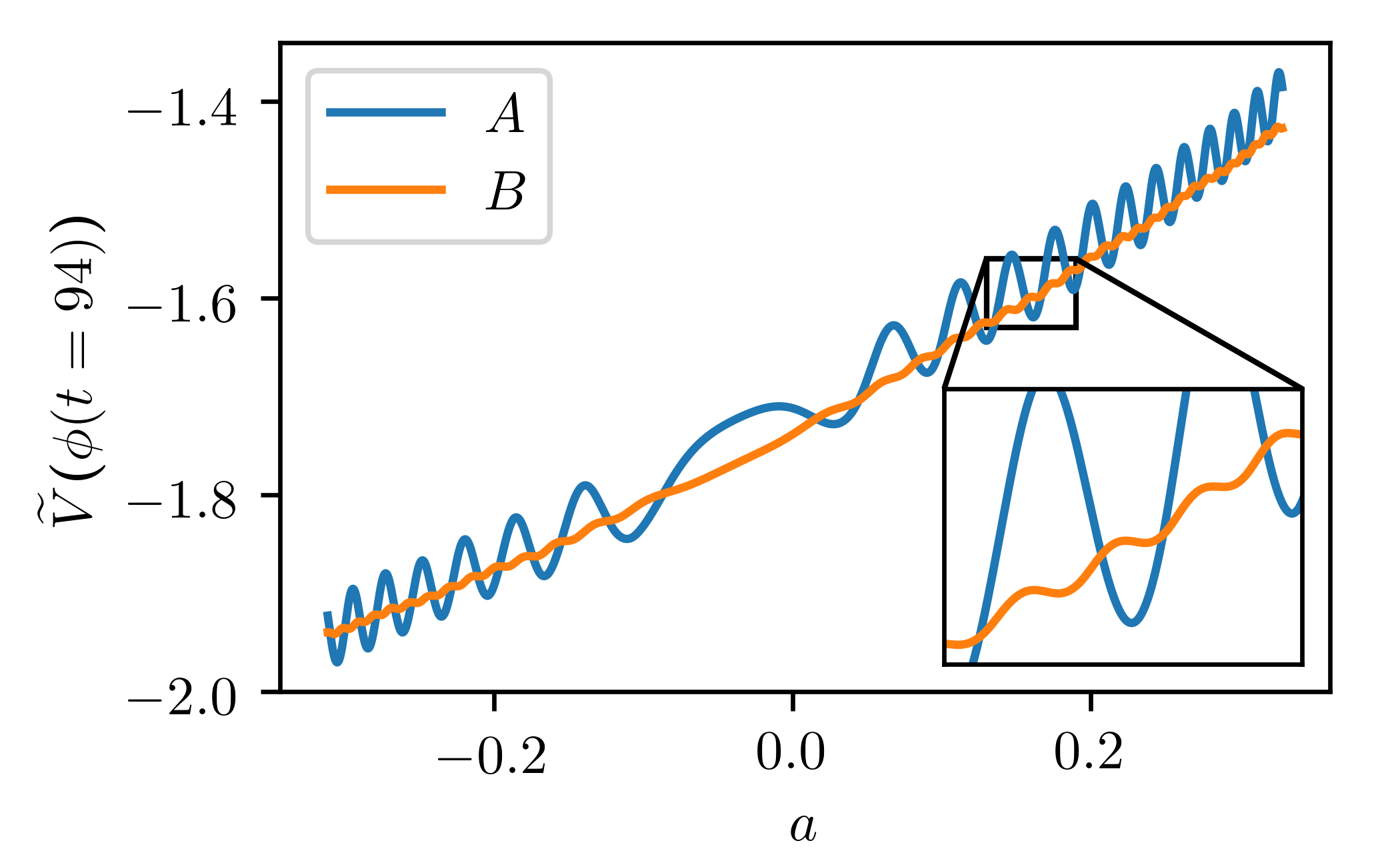}
	\caption{
		\textbf{FitzHugh-Nagumo manifold projection.}
		$A$ was calculated using about 24 time points per cycle in the original time series; $B$ was calculated using twice the time sampling of $A$.
		}
	\label{fig:fn_conv}
\end{figure}

\begin{figure}[htbp]
	\includegraphics[width=1.0\columnwidth]{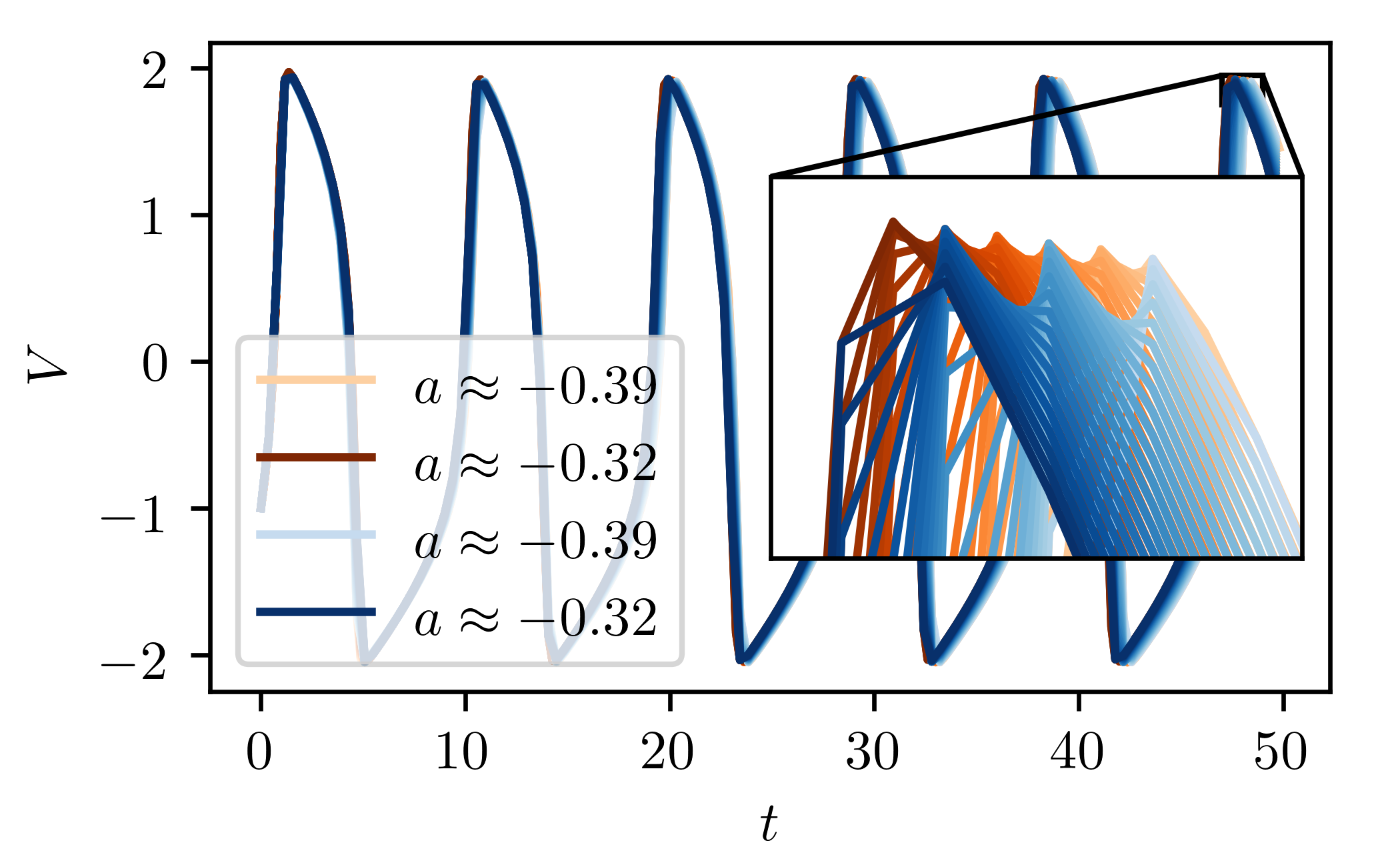}
	\caption{
		\textbf{FitzHugh-Nagumo amplitude oscillations.}
		Colors are the same as in Fig.~\ref{fig:fn_conv}, with dark/light indicating the value of the parameter $a$.
		As the peak moves between sampled time points, the amplitude appears to oscillate.
		}
	\label{fig:fn_oscillations}
\end{figure}

As demonstrated in Fig.~\ref{fig:fn_conv}, doubling the sampling of time points doubles the frequency of these manifold oscillations, but their amplitude decreases by a factor of $\sim 10$.
Hence, in the limit of infinite sampling they disappear.
In practice they will be negligible as long as enough time points per cycle are sampled for the amplitude of the oscillations to be small compared to the amplitude of the signal itself (and to changes effected by the parameters).

The attractors of chaotic systems have fractal structure that is realized only in the limit of infinite sampling time $T$.
Accordingly, as more time points are included, the kernel density estimate Eq.~\eqref{eq:kde} will approach the true distribution $f(y, \theta)$ asymptotically.
Figure \ref{fig:lorenz_conv} illustrates the convergence of a cross section of the cost Eq.~\eqref{eq:cost_KDE} for the Lorenz system as the total sampling time $T$ is varied.

\begin{figure}[htbp]
	\includegraphics[width=1.0\columnwidth]{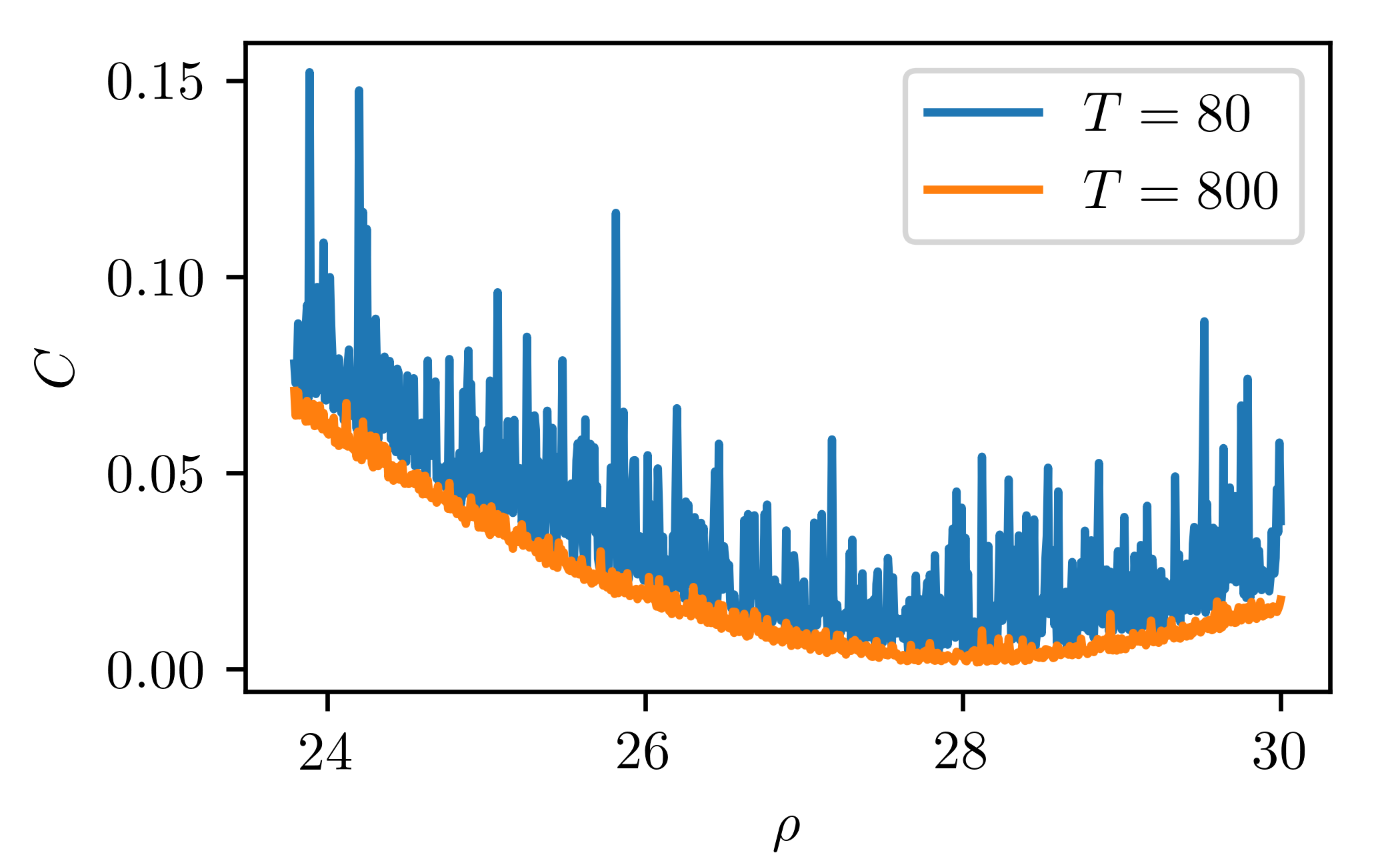}
	\caption{
		\textbf{Lorenz cost.}
		As the number of sampled time points grows, the noise in the cost dies away.
		When fit to a parabola, the MAE between the parabola and the cost cross section shown is 0.0088 for $T = 80$ and 0.0013 for $T = 800$ (about a sevenfold reduction in noise).
		}
	\label{fig:lorenz_conv}
\end{figure}

\end{document}